\documentclass[11pt]{article}

\usepackage[utf8]{inputenc}

\usepackage{fullpage}
\usepackage{amsthm}
\usepackage{dsfont}
\usepackage{amsmath}
\usepackage{amssymb}
\usepackage{upgreek}
\usepackage{esint}
\usepackage{esvect}
\usepackage{graphicx}
\usepackage[margin=1in]{geometry}
\usepackage{bm}
\graphicspath{ {img/} }
\usepackage{mathtools}
\usepackage{mathrsfs}
\usepackage{nicefrac}
\usepackage[ruled,vlined]{algorithm2e}
\usepackage{enumitem}
\usepackage[dvipsnames]{xcolor}

\usepackage{thmtools, thm-restate}
\usepackage{hyperref}
\hypersetup{
    colorlinks=true,
    linkcolor=blue,
    citecolor=violet   
    }
\usepackage[nameinlink]{cleveref}
\usepackage{array}
\usepackage{tabularx}
\usepackage{ragged2e}
\usepackage{makecell}
\usepackage{mathtools}
\newcolumntype{C}[1]{>{\centering\arraybackslash}p{#1}}
\usepackage{authblk}

\usepackage[normalem]{ulem} 

\usepackage{physics}
\usepackage{pgfplots}
\usepackage{algpseudocode}

\newtheorem{theorem}{Theorem}[section]

\newtheorem{proposition}[theorem]{Proposition}

\newtheorem{definition}[theorem]{Definition}

\newtheorem{construction}{Construction}

\newtheorem{example}[theorem]{Example}
\usepackage{tikz}
\usepackage{tikz-cd}
\newcommand{\tabcite}[1]{\\\makebox[\linewidth][r]{#1}}

\newcommand{\Hull}{\mathrm{Hull}}
\newcommand{\C}{\mathbb{C}}
\newcommand{\Z}{\mathbb{Z}}

\newcommand{\F}{\mathbb{F}}

\theoremstyle{definition}

\newtheorem{corollary}[theorem]{Corollary}

\newtheorem{remark}[theorem]{Remark}

\newcommand{\supp}{\mathrm{supp}}
\newcommand{\rmv}[1]{}
\renewcommand{\ev}{\mathrm{ev}}

\title{Entanglement-assisted quantum locally recoverable codes: bounds and constructions with availability}

\author[1]{Rutuja Kshirsagar\thanks{Email: rkshirsagar@fujitsu.com}}\affil[1]{Fujitsu Research of America, Inc., Santa Clara, CA, U.S.A}

\author[2]{Gretchen L. Matthews\thanks{G. L. Matthews and the work at Virginia Tech is partially supported by NSF DMS-2502705, NSF CNS-2413218, and the Commonwealth Cyber Initiative. Email:  gmatthews@vt.edu}}\affil[2,3]{Virginia Tech, Blacksburg, VA, U.S.A.}

\author[3]{Julia Shapiro\thanks{J. Shapiro is also supported by the Department of Defense Cyber Service Academy Scholarship. Email: juliams22@vt.edu}}
\date{}

\begin{document}

\maketitle

\begin{abstract}
In this work, we define entanglement-assisted quantum locally recoverable codes with availability, 
in which any set of up to \(\delta-1\) erased qudits can be recovered from any one of \(t\) local recovery sets, each of size at most \(r+\delta-1\), with the recovery sets intersecting exactly in the erased coordinates, where $r$ is a (small) positive integer. We show that shared entanglement permits \(t>1\), meaning that multiple local recovery sets can be available for the same set of up to $\delta -1$ erasures.
We establish a Singleton-like bound for this family of codes and present random constructions based on classical linear codes with Vandermonde parity-check matrices. We also provide explicit constructions of entanglement-assisted quantum locally recoverable codes with availability from several classical code families and their folded versions, including Tamo--Barg codes, fiber-product codes, and algebraic-geometry codes such as one-point Hermitian and Suzuki codes.
\end{abstract}

\section{Introduction}

Locally recoverable codes (LRCs) were introduced in \cite{LRC_intro} for  applications in distributed storage systems. An $(n,k,r)$ LRC over the alphabet $\mathbb{F}_q$ provides an efficient framework for erasure recovery of alphabet symbols, encoding $k$ symbols into $n$ symbols in such a way that each codeword symbol can be recovered from a collection of at most $r$ other symbols, referred to as a recovery set. Since any linear code of dimension $k$ naturally has recovery sets of size $k$ for each symbol, the focus is typically on LRCs with small locality $r<k$. This notion was generalized to $(r,\delta)$ LRCs \cite{KPLK14}, where each recovery set can tolerate up to $\delta-1$ erasures and further extended to include availability $t$ in \cite{availability_intro}, meaning that each symbol admits $t$ (disjoint) recovery sets. Explicit instances of LRCs based on families of algebraic-geometry codes over various curves have been demonstrated in \cite{ballentine.barg.vladut.hlrc,  Barg_2015, Haymaker2016LocallyRC, kadiyam_das_2020_lrc_with_availibility,  tamo.barg.lrc}. We refer to this as the classical setting, in contrast to this paper's focus on quantum error-correcting codes. 

The notion of quantum local recovery was introduced in \cite{golowich2025quantum} and extended to the $(r,\delta)$ setting in \cite{galindo2024quantumrdeltalocallyrecoverablecodes}, giving rise to quantum locally recoverable codes (QLRCs). QLRCs provide a promising framework for future quantum data storage systems \cite{sharma2024quantumlocallyrecoverablecodes}. 
Constructions have been studied in recent works including \cite{galindo2026quantumrdeltabchhomothetic, li2025optimalquantumlrcshermitian,li2025improvedboundspurequantumlrc, luo2025boundsconstructionsquantumcss,zhou2025optimalquantumrdeltalocallyrepairable}. Like their classical counterparts, QLRCs are developed as a step toward achieving quantum distributed storage. In such a model, an erroneous qubit (node) can be recovered by connecting to a small set of local qubits rather than requiring the device to have an all-to-all connectivity, thus minimizing the possibility of long-range connections. Erasure models assume that the positions of lost symbols or qubits are known in advance, reducing decoding to a reconstruction problem rather than error identification. This makes erasures fundamentally easier to handle than general errors in both classical and quantum settings \cite{Cover2006,gottesman1997stabilizercodesquantumerror,Grassl_1997}. Recent experimental results further support this approach, showing that in platforms such as neutral-atom and superconducting systems, detectable faults can be mapped to erasures \cite{Teoh_2023, Wu_2022}. Thus, although QLRCs are motivated by erasure recovery, they support quantum error correction more broadly by enabling low-overhead local repair of detectable faults.
More broadly, QLRCs are expected to be relevant to future quantum memory architectures.

Many leading quantum error-correcting code families are Calderbank–Shor–Steane (CSS) codes, including surface codes \cite{BravyiKitaev1998,  ErrorCorrectionZooSurfaceCode, Kitaev2003} and numerous quantum low-density parity-check (QLDPC) constructions \cite{BreuckmannEberhardt2021}, which require dual containment of the underlying classical codes. 
Entanglement-assisted quantum error-correcting codes (EAQECCs) were introduced in \cite{brun2006correcting} to relax this dual-containment constraint required in CSS code constructions \cite{Calderbank_1996,Steane_1996}. EAQECCs not only allow more flexible constructions, but can also improve the achievable dimension–distance tradeoff by leveraging pre-shared entanglement \cite{brun2006correcting,hsieh2007eaqecc}. These codes directly inherit the structure of classical linear codes, allowing them to integrate with well-established classical coding theory. Moreover, pre-shared entanglement can improve quantum coding performance: entanglement-assisted quantum error-correcting codes may achieve better parameters than comparable unassisted schemes by relaxing orthogonality constraints and enabling constructions from arbitrary classical linear codes~\cite{Grassl2021EA}.

In \cite{guruswami2026quantumhierarchicallocallyrecoverable}, the authors present CSS $(r,\delta)$ QLRCs, including both random and explicit constructions. In \cite{golowich2025quantum}, the authors show that availability is not possible in the case of quantum CSS constructions when using disjoint recovery sets that only intersect at the (erased) coordinate $i$. However, it is possible to have more than one recovery set (correcting a single erasure) if the sets are allowed to overlap in more than one coordinate \cite[Proposition 23]{bu2025quantumlocallyrecoverablecode}. As an illustration, \cite[Example 24]{bu2025quantumlocallyrecoverablecode} demonstrates that constructing the Steane code from two Hamming codes yields a QLRC with $4$ recovery sets of size $3$ for each qudit and the overlap at most $2$. However, availability in the classical sense requires disjoint recovery sets, so that erased symbols in one recovery set do not impact the ability to recover an erasure from another recovery set. That is the motivation for this work: utilizing entanglement allows for multiple disjoint recovery sets for quantum codes.

In this work, we introduce $(r,\delta)$ entanglement-assisted quantum locally recoverable codes (EAQLRCs) with availability $t$, with the property that any set of $\delta-1$ qudits can be recovered from one of $t$ 
subsets of size at most \(r + \delta - 1\), each containing the erased qudits, and whose pairwise intersections are exactly the erased set. 
Such codes combine the advantages of local recovery with the flexibility provided by shared entanglement. {
Consequently, one can build an EAQLRC from any two classical LRCs, including codes that are not dual-containing. This additional flexibility is particularly valuable for LRCs with locality and availability constraints, where enforcing dual containment may significantly restrict the achievable parameters.} We determine a Singleton-like bound for these codes. We provide both random and explicit constructions of EAQLRCs, where the underlying codes are from several families of classical LRCs, including fiber product codes, Tamo–Barg codes, folded Tamo–Barg codes, one-point Hermitian codes, and codes from Suzuki curves.

We summarize the known results for QLRCs alongside our results for EAQLRCs in \Cref{table:QLRC_bounds}. 
\begin{table*}[t]
\centering
\small
\renewcommand{\arraystretch}{1.35}
\setlength{\tabcolsep}{3pt}

\newcolumntype{Y}{>{\RaggedRight\arraybackslash}X}
\newcommand{\cellmath}[1]{%
  $\begin{aligned}#1\end{aligned}$%
}

\begin{tabularx}{\textwidth}{|>{\RaggedRight\arraybackslash}p{0.08\textwidth}|Y|Y|Y|}
\hline
&
\centering $(r,2)$ QLRCs \rmv{\cite{golowich2025quantum}}\arraybackslash
&
\centering $(r,\delta)$ QLRCs \rmv{\cite{guruswami2026quantumhierarchicallocallyrecoverable}}\arraybackslash
&
\centering {$(r,\delta)$ EAQLRCs \rmv{[this work]}}\arraybackslash
\tabularnewline
\hline

Singleton (upper) bound on $k$
&
\makecell[l]{
\scriptsize
\cellmath{
& n-2(d-1) \\
&-\left\lfloor \frac{n-(d-1)}{r+1} \right\rfloor \\
&-\left\lfloor
\frac{
n-2(d-1)-\left\lfloor \frac{n-(d-1)}{r+1} \right\rfloor
}{r+1}
\right\rfloor
}
\\
\tabcite{\cite[Theorem 35]{golowich2025quantum}}
}
&
\makecell[l]{
\scriptsize
\cellmath{
& n-2(d-1) \\
&-\left\lfloor
\frac{n-(d-1)}{(r+\delta-1)}
\right\rfloor \\
&-\left\lfloor
\frac{
n-2(d-1)
-\left\lfloor \frac{n-(d-1)}{r+\delta-1} \right\rfloor
}{r+\delta-1}
\right\rfloor
}
\\
\tabcite{[Extension of \cite[Theorem 35]{golowich2025quantum}]}
}
&
\makecell[l]{
\scriptsize
\cellmath{
& n-2(d-1) \\
&-\left\lfloor
\frac{n-(d-1)}{t(r+\delta-1)}
\right\rfloor \\
&-\left\lfloor
\frac{
n-2(d-1)
-\left\lfloor \frac{n-(d-1)}{t(r+\delta-1)} \right\rfloor
}{t(r+\delta-1)}
\right\rfloor\\
&+c
}
\\
\tabcite{[\Cref{thm:SingletonboundEAavail}]}
}
\tabularnewline
\hline

Random dim.
&
\makecell[l]{
$n-2\left(\frac{n}{r+1}+\ell\right)$
\\
\tabcite{\cite[Lemma 38]{golowich2025quantum}}
}
&
\makecell[l]{
$n-2\left(\frac{n}{r+\delta-1}(\delta-1)+\ell\right)$
\\
\tabcite{\cite[Lemma 3.3]{guruswami2026quantumhierarchicallocallyrecoverable}}
}
&
\makecell[l]{
$n-2\left(\frac{n}{r+\delta-1}(\delta-1)+\ell\right)+c$
\\
\tabcite{[\Cref{prop:random_eaqlrc}]}
}
\tabularnewline
\hline

QTB dim.
&
\makecell[l]{
$2(\ell-q)\left(1-\frac{2}{r+1}\right)+\epsilon$
\\
$\ell\in[q]_0,\ \epsilon\in[-2,2]$
\\
\tabcite{\cite[Lemma 56]{golowich2025quantum}}
}
&
\makecell[l]{
$1+2(\ell-q)\left(1-\frac{2(\delta-1)}{r+\delta-1}\right)+\epsilon,$
\\
$\ell\in[q]_0$,\\
$\epsilon\in[-2(\delta-1),\,2(\delta-1)]$
\\
\tabcite{\cite[Lemma 5.4]{guruswami2026quantumhierarchicallocallyrecoverable}}
}
&
\makecell[l]{
$(q-1)-2\left\lfloor \frac{\ell}{r+\delta-1} \right\rfloor
+$\\
$2\min\!\left( \ell \bmod (r+\delta-1),\, r \right)$\\$+c$
\\ $\ell\in[q]_0$  \\[0.0em]
\tabcite{[\Cref{thm:EA_QTB}]}
}
\tabularnewline
\hline
\end{tabularx}
\caption{Comparison of bounds on the dimension of  QLRCs from the literature and the entanglement-assisted codes studied in this work. Here $c=\dim(C_1)-\dim(\operatorname{Hull}_{C_2}(C_1))$ denotes the entanglement parameter (see \Cref{rem:EACSS_params}). We also normalize the comparison with prior bounds in \cite{golowich2025quantum} by replacing $r$ with $r+1$ to match the prevailing locality convention.}
\label{table:QLRC_bounds}
\end{table*}
To facilitate comparison, \Cref{table:QLRC_bounds} uses a common parameter convention across the literature. In particular, we write \(c\) for the number of shared maximally entangled pairs (ebits) used in an entanglement-assisted construction. We also normalize the recovery-set notation when quoting results from sources that measure locality by the number of other coordinates needed for repair: in those cases, we replace \(r\) by \(r+1\). In the entanglement-assisted setting, the parameter \(c\) records the amount of shared entanglement available to the encoder and decoder. For constructions of the form \(EACSS(C_1,C_2)\), the value of \(c\) is determined by the failure of the dual-containment condition and may be expressed in terms of the corresponding relative hull. In the check-space convention used for the constructions below, the special case \(EACSS(C,C)\) has \[ c=\dim(C)-\dim(\operatorname{Hull}(C)). \]
The table compares known bounds and construction parameters for QLRCs with the entanglement-assisted variants studied here. The row labeled ``Singleton bound'' gives the corresponding upper bound on the quantum dimension \(k\), while ``Random dim.'' and ``QTB dim.'' refer to dimensions achieved by random constructions and quantum Tamo--Barg-type constructions, respectively.
In many cases, sharper parameters can be obtained by determining the hull $\mathrm{Hull}(C):=C \cap C^{\perp}$ or relative hull $\mathrm{Hull}_{C}(C'):=C' \cap C^{\perp}$ of the underlying classical codes $C$ and $C'$.  The hull or relative hull of the underlying classical codes controls the entanglement consumption and therefore affects the resulting entanglement-assisted dimension. Compared with CSS-type constructions built from the same underlying classical data, the entanglement-assisted framework can allow more flexible parameter choices because it does not require dual containment. When $t=1$ and $c=0$ in \Cref{cor:Singleton2}, we recover the original QLRC Singleton bound of \cite[Theorem 35]{golowich2025quantum} for the $\delta=2$ case and the general $(r,\delta)$ bound from \cite{guruswami2026quantumhierarchicallocallyrecoverable}.  For $t=1$ and $c \geq 0$, our bound is a generalization of the other two Singleton bounds, applicable in the entanglement-assisted setting, where the $c$ ebits relax the dual-containment constraint $C_2^{\perp}\subseteq C_1$, which is not required in the entanglement-assisted setting and therefore gives greater freedom in the choice of LRCs. 
In the entanglement-assisted quantum local recovery setting, we are able to provide a Singleton bound for $(r,\delta)$ EAQLRCs with availability $t$, whereas in the CSS setting previously considered, availability in this disjoint-recovery-set sense is not possible.

The remainder of this paper is organized as follows. \Cref{sec:prelims} establishes notation and contains background necessary for the later sections. \Cref{sec:EALRC} introduces entanglement-assisted quantum locally recoverable codes, and \Cref{sec:random} provides random constructions. Constructions derived from algebraic-geometry codes and curves over finite fields are found in \Cref{sec:constructions}. The paper concludes with a summary and open problems in \Cref{sec:concl}.

\section{Preliminaries}
\label{sec:prelims}

In this section, we establish the notation used throughout the paper and introduce the main concepts needed for the code constructions developed in the subsequent sections. We begin with the necessary background on linear and quantum codes, and then review the key notions and results on local recovery in both the classical and quantum settings.

\subsection{Classical and quantum codes}

\subsubsection{Linear codes}
Let $\Z_{> 0}$ be the set of positive integers, $\mathbb{Z}_{\geq 0} := \Z_{>0} \cup \{0\}$, and $\mathbb{C}$ be the field of complex numbers. For \(t \in \Z_{\geq 0}\), define \([t]_0 := \{0,1,\dots,t-1\}\), and for \( t \in \Z_{>0}\), set \([t] := \{1,\dots,t\}\). Let \(\mathbb{F}_q\) be the finite field with \(q=p^s\) elements, where \(p\) is prime and \(s \in \Z_{> 0}\); let \({\mathbb{F}_q}^*:=\mathbb{F}_q \setminus \{  0 \}\) denote its multiplicative group of nonzero elements. Denote by \(\mathbb{F}_q^n\) the \(n\)-dimensional vector space over \(\mathbb{F}_q\). The support of a vector $u\in \mathbb{F}_q^n$ is \(\mathrm{supp}(u) = \{ j \in [n] \mid u_j \neq 0 \}\).
The set of all \(m \times n\) matrices over \(\mathbb{F}_q\) is written as \(\mathbb{F}_q^{m \times n}\); for \(A \in \mathbb{F}_q^{m \times n}\), the \((i,j)\)-th entry is denoted by \(A_{ij}\). 
For a subset \(I \subseteq [n]\) with \(|I|=\ell\), let \(\pi_I : \mathbb{F}_q^n \to \mathbb{F}_q^\ell\) denote the projection onto the coordinates indexed by \(I\).

Let \(C\) be an \([n,k,d]_q\) linear code over \(\mathbb{F}_q\), meaning a $k$-dimensional subspace of $\F_q^n$ in which any two distinct vectors (called codewords) differ in at least $d$ coordinates and there is a pair of codewords that differ in exactly $d$ coordinates; we omit the subscript $q$ when the field size is clear from context. Here \(n\), \(k\), and \(d\) denote the length, dimension, and minimum distance, respectively. A linear code is called maximum-distance separable (MDS) if and only if $d = n-k+1$, meaning that the Singleton Bound $k+d \leq n+1$ is met with equality. The weight of a codeword \(c \in C\), denoted \(\mathrm{wt}(c)\), is its distance from the zero codeword. Let \(G \in \mathbb{F}_q^{k \times n}\) and \(H \in \mathbb{F}_q^{(n-k) \times n}\) be generator and parity-check matrices of \(C\), respectively, meaning that $C$ is the row space of $G$ and is the nullspace of $H$. The dual code of \(C\) is its orthogonal complement with respect to the standard Euclidean inner product, defined as
\[
C^\perp := \{ u \in \mathbb{F}_q^n \mid u \cdot v = 0 \text{ for all } v \in C \}.
\]
The hull of code $C$ is its intersection with its dual, that is $\mathrm{Hull}(C) = C \cap C^\perp$. The relative hull of a code $C'$ with respect to code $C$ is $\mathrm{Hull}_{C}(C') = C' \cap C^\perp$. The punctured code of $C$ onto coordinates indexed by a subset $I$ of cardinality $\ell$ is the $[\ell,k',d']$ code
\[
C|_I := \pi_I(C) = \{ (c_j)_{j \in I} \mid c = (c_1,\dots,c_n) \in C \} 
\] where $k-(n-\ell) \leq k' \leq k$ and $d-(n-\ell) \leq d' \leq d$.

Another useful way to derive new codes from existing ones is {folding}, in which consecutive coordinates are grouped into symbols over an extended alphabet \cite{GuruswamiRudra2008,GuruswamiXing2012}. Let $C \subseteq \mathbb{F}_q^n$ be a linear code, and suppose $m \in \Z_{>0}$ is such that $m \mid n$. 
Let \(N:=n/m\). For each \(i\in[N]\), define \[ B_i:=\{(i-1)m+1,\ldots,im\}. \]
The \emph{$m$-fold} of $C$ is the code
\[
C^{(m)}
:=
\left\{
\bigl((c_j)_{j\in B_1},\dots,(c_j)_{j\in B_N}\bigr)
:\;
(c_1,\dots,c_n)\in C
\right\}
\subseteq (\mathbb{F}_q^m)^N.
\]
Note that $C^{(m)}$ is linear over $\F_q$, not necessarily over $\F_{q}^m$ (or $\F_{q^m}$) for $m>1$. Under the folding operation, code $C^{(m)}$ is $\F_q$-linear. We say that the code $C^{(m)}$ has dimension $k/m$ over $\F_q^m$, where $k/m$ need not be an integer, acknowledging this is an abuse of notation. Indeed, $\F_q^m$ is not a field itself (though it is isomorphic to $\F_{q^m}$) and hence $C^{(m)}$ is not an $\F_q^m$-vector space; moreover, it may also be the case that the image of $C^{(m)}$ is not linear over $\F_{q^m}$. We say that $C^{(m)}$ has minimum distance $d':=\min \left\{ 
d(u,v): u, v \in C^{(m)}, u \neq v
\right\} \geq \left\lceil d/m \right \rceil$, where $d(c,c'):=\mid \left\{ i \in [N]:  (c_j)_{j\in B_i} \neq (c_j')_{j\in B_i}\right\} \mid $ for $c, c' \in C^{(m)}$. We define the dual of the folded code \(C^{(m)}\) to be the fold of the dual code, \( \left(C^\perp\right)^{(m)}\). It is worth noting that if $C$ happens to be $\F_{q^m}$-linear so that \(C^{(m)}\) is \(F_{q^m}\)-linear after making the identification \(F_q^m\cong F_{q^m}\) via a trace-dual basis, then \( \left(C^\perp\right)^{(m)}=\left(C^{(m)}\right)^\perp\).

The new code constructions defined in this manuscript rely on families of algebraic-geometry (AG) codes. We start with the definition of general evaluation codes. For a subset $S \subseteq \mathbb{Z}_{\ge 0}$, let
\[
\mathbb{F}_q[X]_S
= \left\{ \sum_{i \in S} a_i X^i \;:\; a_i \in \mathbb{F}_q \right\}
\]
denote the space of polynomials for which only monomials $X^i$ with $i \in S$ may have non-zero coefficients. Let $T \subseteq \F_q$ and
\[
\mathrm{ev}_T : \mathbb{F}_q[X]_S \longrightarrow \mathbb{F}_q^{\,\mid T \mid}
\]
denote the evaluation map on $T$ defined by
\(
\mathrm{ev}(f) = \big( f(x) \big)_{x \in T}\). The subscript $T$ is omitted if the set of evaluation points is clear from the context. The most popular family of evaluation codes consists of Reed-Solomon codes, meaning those in which $T= \left\{ \alpha_1,\ldots,\alpha_n \right\}$ is a set of distinct elements of $\F_q$ and $S=[k]_0$. More formally, let \(
\mathcal{P}_{k-1}
=
\{f \in \mathbb{F}_q[x] : \deg(f)<k\}.
\) Then the Reed--Solomon code of length \(n\) and dimension \(k\) is the evaluation code
\[
\operatorname{RS}_k(\alpha_1,\ldots,\alpha_n)
=
\operatorname{ev}(\mathcal{P}_{k-1})
=
\left\{
\bigl(f(\alpha_1),\ldots,f(\alpha_n)\bigr)
:\;
f \in \mathbb{F}_q[x],\ \deg(f)<k
\right\},
\]
which is an \([n,k,n-k+1]_q\) MDS code.

Algebraic-geometry codes (sometimes called AG codes), detailed below, extend the Reed-Solomon framework. Reed--Solomon codes take as evaluation points a collection of affine points on the projective line and use polynomials of bounded degree to define codewords. More general algebraic-geometry codes instead use rational points on a projective curve together with rational functions from a Riemann--Roch space. \rmv{ The Hermitian and Suzuki codes which will be used in \Cref{sec:constructions} are special cases of algebraic-geometry codes coming from curves with many rational points. The fiber product constructions considered there build on this same algebraic-geometry code perspective but, as we will describe, use several compatible covering maps to produce locality and availability simultaneously. With this motivation, we next review the algebraic-geometry code construction. }

Let $X/\mathbb{F}_q$ be a smooth projective curve of genus $g$. Let $D=P_1+\cdots+P_n$
and $G$ be divisors on $X$ such that
$P_1,\dots,P_n$ are pairwise distinct $\mathbb F_q$-rational points of $X$, none of which appear in the support of $G$. The Riemann--Roch space of $G$ is defined as
\[
\mathcal{L}(G) = \{ f \in \F_q(X) : (f) + G \ge 0 \} \cup \{0\},
\]
i.e., for $G \geq 0$, the set of functions whose poles are bounded by $G$. Let $\ell(G) := \dim_{\F_q}\mathcal{L}(G)$. The associated algebraic-geometry code is defined as
\[
C(D,G)
=
\{(f(P_1),\dots,f(P_n)) : f \in \mathcal{L}(G)\}.\] If $2g-2 < \deg(G) < n$, then $C(D,G)$ is an $[n,k,d]$ code with $k = \ell(G)$, and $d \ge n - \deg(G)$. If $G=\alpha P$ for some positive integer $\alpha$ and $P \in X(\F_q)$, then $C(D,G)$ is called a one-point code. Furthermore, if $X$ is the projective line over $\F_q$, then $C$ is a Reed--Solomon code; if $X$ is the Hermitian curve defined by $y^q + y = x^{q+1}$ over $\F_{q^2}$, then $C$ is a Hermitian code; and if $X$ is the Suzuki curve defined by $y^q-y=x^{q_0}(x^q-x)$ over $\F_q$ where $q_0=2^t$ and $q=2q_0^2=2^{2t+1}$ for some positive integer $t$, then $C$ is a Suzuki code. These will all be used in \Cref{sec:constructions} to design EAQLRCs with multiple, disjoint recovery sets for each symbol.

\subsubsection{Quantum codes}
We now turn our attention to the quantum setting, which is the focus of this paper. We review the definitions of CSS codes and entanglement-assisted codes relevant to the results in this paper. For $x=(x_1,\ldots,x_n)\in \F_q^n$, we write $\ket{x}$ for the corresponding computational basis state
\[
\ket{x}=\ket{x_1}\otimes \cdots \otimes \ket{x_n}\in (\C^q)^{\otimes n},
\]
where the computational basis of $\C^q$ is indexed by the elements of $\F_q$. Thus, for $x,y\in \F_q^n$, the ket $\ket{x+y}$ denotes the basis state labeled by the vector sum $x+y$ in $\F_q^n$.

We now recall the definition of a CSS code (see \cite{Calderbank_1996,Steane_1996}). 

\begin{definition}\label{def:CSS}
Let $C',C\subseteq \F_q^n$ be linear codes with parameters $[n,k',d']$ and $[n,k,d]$, respectively, such that $C^\perp\subseteq C'$. The associated quantum \emph{CSS code} is the subspace
\[
\textup{CSS}(C',C)
:=
\operatorname{Span}_{\C}\left\{
\frac{1}{\sqrt{|C^\perp|}}
\sum_{y\in C^\perp}\ket{x+y}
:\ x\in C'
\right\}.
\]
\end{definition}

\begin{remark}
     It is well known that \(\textup{CSS}(C', C)\) is a $[[n,k'+k - n,\min\{\textup{wt}(C'\setminus C^{\perp}), \textup{wt}(C \setminus C'^{\perp})\}]]$ quantum code; see, for example, \cite{Calderbank_1996,gottesman1997stabilizercodesquantumerror, Steane_1996}.
\end{remark}

The CSS construction provides a large and important class of quantum codes when the underlying classical codes satisfy the dual-containment condition $C^\perp \subseteq C'$. Because the dual-containment condition can be too restrictive for many code families of interest, it is natural to use the entanglement-assisted framework, where pre-shared entanglement compensates for the failure of dual containment.
 Unlike standard stabilizer codes and the CSS codes in \Cref{def:CSS}, EAQECCs do not require the dual-containment condition on the underlying classical codes, providing greater flexibility in code constructions. In general, an EAQECC with parameters
\(
[[n,k,d;c]]_q
\)
encodes $k$ logical $q$-ary qudits into $n$ physical qudits, has minimum distance $d$, and uses $c$ shared maximally entangled pairs (ebits) between sender and receiver.
We next review the EAQECC construction from \cite{Galindo2021CorrectionTE} as it applies to CSS codes. The construction uses a pair of $\F_q$ linear codes of equal length, yielding a CSS-like family of EAQECCs. Recall that for two linear codes \(A,B\subseteq \mathbb F_q^n\), we write \[ \operatorname{Hull}_{A}(B):=B\cap A^\perp \] to mean the relative hull of $B$ with respect to $A$.

An entanglement-assisted code can be constructed using two classical codes as follows. Let $C',C \subseteq \mathbb{F}_q^n$ be linear codes with parameters
$[n,k',d']$ and $[n,k,d]$, respectively. Set
\[
c = \dim(C'^\perp)-
\dim \left( \mathrm{Hull}_{C'}(C) \right)
=\dim(C^\perp)-
\dim \left( \mathrm{Hull}_{C}(C') \right).
\]
Choose augmented codes
\(
\widetilde C',\widetilde C \subseteq \mathbb{F}_q^{\,n+c}
\)
such that $\dim(\widetilde C)=k +c$ and $\dim(\widetilde C')=k' +c$ and
\[
\widetilde C^\perp \subseteq \widetilde C'.
\]
The \emph{entanglement-assisted quantum code} associated to $C'$ and $C$ is 
\[
\textup{EAQ}(C',C)
:= \operatorname{Span}_{\C}
\left\{
\frac{1}{\sqrt{|\widetilde C^\perp|}}
\sum_{y\in \widetilde C^\perp}\ket{x+y}
:\;
x\in \widetilde C'
\right\}.
\]

It can be checked that $\textup{EAQ}(C',C)$ is an
\(
[[n,\;k'+k-n+c,\;d;\;c]]
\)
entanglement-assisted quantum code, that is, a code encoding
$k'+k-n+c$ logical qudits into $n$ transmitted qudits with the assistance
of $c$ pre-shared maximally entangled pairs (ebits) between sender and receiver.
Equivalently, the underlying stabilizer acts on $n+c$ qudits, where the first
$n$ qudits are transmitted through the channel and the remaining $c$ qudits are
the ebits pre-shared by the sender and the receiver. Here, 
\[
d \geq
\begin{cases}
\min\{\mathrm{wt}(C'^\perp),\, \mathrm{wt}(C^\perp)\} 
& \text{if } C'^\perp \subseteq C \\[6pt]
\min\big\{
\mathrm{wt}\big(C'^\perp \setminus 
\mathrm{Hull}_{C'}(C)\big),
\mathrm{wt}\big(C^\perp \setminus
\mathrm{Hull}_{C}(C')\big)
\big\}
& \text{otherwise.}
\end{cases}
\]
See, for example, \cite{GalindoHernandoMatsumotoRuano2019,Galindo2021CorrectionTE}
for the EAQ construction over arbitrary finite fields, and
\cite{relative_hulls_24}
for the relative-hull formulation.

Considering 
the $EAQ(C',C)$ constructed above, when \(C^\perp\subseteq C'\), we have that \(C'^\perp\subseteq C\). Therefore \[ C\cap C'^\perp=C'^\perp, \] and hence \[ c=\dim(C'^\perp)-\dim(C\cap C'^\perp)=0. \] Thus, the entanglement-assisted construction recovers the usual CSS setting.
Hence, this definition reflects the connection between CSS codes and their entanglement-assisted counterparts. Up to this point, we have used \(C'\) and \(C\) denote the two classical codes used in the CSS construction; later, we may use \(C_X\) and \(C_Z\) when emphasizing check spaces. At times, it is convenient to consider a parity-check perspective instead, as captured in the next definition. 
For the remainder of the paper, we will refer to the following family of codes as EACSS codes.

\begin{definition}
\label{def:EACSS_check_space}
Let \(C_X,C_Z\subseteq \mathbb F_q^n\) be linear codes with parameters
\([n,k_X,d_X]\) and \([n,k_Z,d_Z]\), respectively. Regard \(C_X\) and
\(C_Z\) as check spaces so that  elements of \(C_X\) define \(X\)-type checks, and
elements of \(C_Z\) define \(Z\)-type checks.
Set
\[
 c
 =k_Z-
 \dim \left(\mathrm{Hull}_{C_X}(C_Z) \right)
 =k_X-
  \dim \left(\mathrm{Hull}_{C_Z}(C_X)\right)
.
\]
Choose augmented check spaces
\(
\widetilde C_Z,\widetilde C_X\subseteq \F_q^{n+c}
\)
such that
\(\dim(\widetilde C_Z)=k_Z\),
 \(\dim(\widetilde C_X)=k_X\),
 and \[\widetilde C_X\subseteq\widetilde C_Z^\perp.
\]
The associated augmented CSS code is
\[
\widetilde Q
:=
\operatorname{Span}_{\C}
\left\{
\frac{1}{\sqrt{|\widetilde C_X|}}
\sum_{y\in \widetilde C_X}|x+y\rangle
:
 x\in \widetilde C_Z^\perp
\right\}
\subseteq (\mathbb C^q)^{\otimes(n+c)}.
\]
The corresponding entanglement-assisted CSS code, denoted
\(EACSS(C_X,C_Z),
\)
is obtained by interpreting the first \(n\) coordinates as the transmitted
qudits and the final \(c\) coordinates as the receiver's halves of the shared
ebits.  
\end{definition}

\begin{remark}
\label{rem:EACSS_params} 
It can be verified that \(EACSS(C_X,C_Z)
\) has parameters
\(
 [[\,n,\; n-k_X-k_Z+c,\; d;\; c\,]]
\) where 
\[ d\ge \min\left\{ \operatorname{wt}\bigl(C_X^\perp\setminus \operatorname{Hull}_{C_X}(C_Z)\bigr), \operatorname{wt}\bigl(C_Z^\perp\setminus \operatorname{Hull}_{C_Z}(C_X)\bigr) \right\}. \]
    \end{remark}
 At times, we are interested in the case in which $C=C_X=C_Z$. Then we see that \(EACSS(C,C)\) is a  
 \( [[n, n-2k+c, d; c]]\) code where 
 \[ d\ge \min\left\{ \operatorname{wt}\bigl(C^\perp\setminus \operatorname{Hull}_{C}(C)\bigr), \operatorname{wt}\bigl(C^\perp\setminus \operatorname{Hull}_{C}(C)\bigr) \right\} \] and $k=k_X=k_Z$.
 In the standard model for EAQECCs, the \(c\) shared ebits are treated as noiseless pre-shared entanglement and are used only as a resource for decoding. Variants in which the receiver's halves of the ebits are also subject to noise have been studied in the literature; see, for example, \cite{AHN201647,PhysRevA.86.032319,PhysRevA.78.012337}. In this work, however, we restrict attention to the standard noiseless-ebit setting, which is the framework relevant for the constructions and bounds developed below.
 
In \cite{Grassl_2022}, the authors provide a Singleton bound for entanglement-assisted codes  which in the case $d -1 \ < n/2$ is  
\[k \leq n - 2(d-1) + c,\] meaning the entanglement-assisted quantum Singleton bound differs from the quantum Singleton bound by the amount of shared entanglement $c$. In this work, we consider the setting in which $d-1 < n/2$, since otherwise $3d-3-n$ grows faster than $n-d+1$, yielding a much smaller upper bound on $k$. In \Cref{sec:EALRC}, we will extend this bound to the case of entanglement-assisted LRCs. 

\subsection{Local recovery}

\subsubsection{Classical locally recoverable codes}
We begin in the classical setting because the quantum codes in this paper are built from pairs of classical LRCs. The key idea behind locality is that each coordinate of a codeword should be recoverable from a small subset of other coordinates, while availability requires the existence of several such recovery sets. In algebraic-geometry constructions, these recovery sets often arise from the geometry of the evaluation set, especially from fibers of suitable morphisms to auxiliary curves. We therefore review both the coding-theoretic definition of locality and the geometric setup that will later allow us to construct explicit families of EAQLRCs with availability. 

We begin with the standard notions of locality, $(r,\delta)$ locality, and
availability for linear codes; see, for example,
\cite{LRC_intro,KPLK14,availability_intro}. A linear code $C \subseteq \mathbb{F}_q^n$ is a \emph{locally recoverable code (LRC)} with locality $r$ if for all $i \in [n]$ there exists a set of coordinates \( I(i) \subseteq [n]\) containing $i$ 
with size bounded as \( |I(i)| \le r+1,\) such that $c_i$ is determined by $\{c_j : j \in I(i) \setminus \{ i \} \}$ for all $c \in C$. Such a code recovers exactly one erasure per recovery set (that is, $\delta = 2$).

We say that $C$ is an $(r,\delta)$ LRC with availability $t$ if, for every coordinate $i \in[n]$, there exists a set
\(
J_i=\{i_1,\ldots,i_{\delta-1}\}\subseteq[n]
\)
with $i\in J_i$, and there exist $t$ recovery sets
\(
I_1(i),\ldots,I_t(i)\subseteq[n]
\)
such that, for all $j\ne j'$,
\[
I_j(i)\cap I_{j'}(i)=J_i,
\]
and, for each $j\in[t]$,
\[
|I_j(i)|\le r+\delta-1,\qquad
d(C|_{I_j(i)})\ge \delta,\qquad
\dim(C|_{I_j(i)})\le r.
\]
Unlike the classical definition, we allow recovery sets to intersect on the $\delta-1$ coordinates that the local code is capable of correcting. We emphasize that we use the following set-erasure notion of availability, generalizing the more common single erasure setting. Let \(J\subseteq[n]\) with \(|J|\le \delta-1\). We say that \(C\) has \((r,\delta)\) locality with availability \(t\) for such erasure sets if there exist sets \[ I_1(J),\ldots,I_t(J)\subseteq[n] \] such that \(J\subseteq I_a(J)\), \(|I_a(J)|\le r+\delta-1\), and \[ I_a(J)\cap I_b(J)=J \qquad\text{for all }a\ne b. \] Moreover, for each \(a\), the punctured code \(C|_{I_a(J)}\) has minimum distance at least \(\delta\) and dimension at most \(r\).

For comparison, we recall the Singleton-type bound for the standard coordinate-availability setting
\cite{kadiyam_das_2020_lrc_with_availibility}. Let \(C\) be an \([n,k,d]\) linear code over \(\mathbb{F}_q\)
with locality parameters \((r,\delta)\) and availability \(t\). Then 
\[
d \leq n-k+1-
\left(
\left\lceil \frac{t(k-1)+1}{\,t(r-1)+1\,} \right\rceil
-1
\right)(\delta-1).
\]
Setting $t=1$ recovers the 
Singleton bound presented in \cite{prakash2012optimallinearcodeslocalerrorcorrection}. We will apply this bound only in situations where the hypotheses agree with the standard coordinate-availability setting. An LRC is called optimal when the minimum distance of the code achieves the Singleton bound. Optimal classical $(r,\delta)$ LRCs were defined by Tamo and Barg in \cite{Barg_2015} for $\delta=2$ and generalized to any integer $\delta \ge 2$ in \cite{ballentine.barg.vladut.hlrc}.

Next, we discuss quantum local recovery. We use $\mathrm{Tr}$ to denote the usual trace, and for $B \subseteq A$, 
$\mathrm{Tr}_{A\setminus B}$ denotes the partial trace over the tensor factors
indexed by $A\setminus B$. Let $M(A)$ denote the set of density matrices on a set $A$ of qudits, that is,
\[
M(A) := \left\{ \rho : (\mathbb{C}^q)^{\otimes A} \to (\mathbb{C}^q)^{\otimes A} \;\middle|\; \rho \ge 0,\ \mathrm{Tr}(\rho)=1 \right\}.
\]
For a pure state $\ket{\psi} \in (\mathbb{C}^q)^{\otimes A}$, we write $\psi = \ket{\psi}\bra{\psi} \in M(A)$. For $\rho \in M(A)$ and $B \subseteq A$, the reduced density matrix is denoted by
\[
\rho_B := \mathrm{Tr}_{A\setminus B}(\rho) \in M(B).
\]
A quantum channel $\mathcal{K}$ from qudits $A$ to qudits $B$ is a completely positive, trace-preserving map
\[
\mathcal{K} : M(A) \to M(B).
\]
Since all Hilbert spaces considered here are finite-dimensional, every quantum
channel admits a Kraus representation
\[
K(\rho)=\sum_{\nu} K_\nu \rho K_\nu^\dagger,
\quad
\text{with }
\sum_\nu K_\nu^\dagger K_\nu = I.
\]

\subsubsection{Quantum locally recoverable codes}

Quantum local recovery was introduced in \cite{golowich2025quantum} in the
setting of a single erased qudit. In that formulation, each coordinate $i$
admits a local recovery channel acting on a small neighborhood of $i$ and
reconstructing the erased qudit from the remaining qudits in that neighborhood.
Following the convention adopted throughout this paper, we write the locality
parameter as $r$ when the corresponding recovery set has size at most $r+1$,
in agreement with the classical literature. 
The same work also generalizes the definition to recovery sets capable of
correcting multiple erasures. Given $\emptyset \neq I \subsetneq
J \subseteq [n]$, a quantum code $Q \subseteq (\mathbb{C}^q)^{\otimes n}$ is
said to be an \emph{$(I,J)$ QLRC} if there exists a
trace-preserving quantum operation
\[
\mathcal{R}_{Q,I}^J,
\]
which acts only on the qudits corresponding to $J$ and leaves all other qudits
unchanged, such that
\[
\mathcal{R}_{Q,I}^J \circ \tau_I\bigl(\ket{\varphi}\!\bra{\varphi}\bigr)
=
\ket{\varphi}\!\bra{\varphi},
\]
for all $\ket{\varphi}\in Q$.
Furthermore, let $r,\delta \in \mathbb{Z}_{>0}$ with $\delta \ge 2$. A quantum error-correcting code 
${Q} \subseteq (\mathbb{C}^q)^{\otimes n}$ is called an \emph{$(r,\delta)$ QLRC} if for each coordinate $i \in [n]$, there exists a subset $J \subseteq [n]$ containing $i$ such that 
\[
|J| \le r + \delta - 1,
\]
and for all subsets $I \subseteq J$ with $|I| \le \delta - 1$, the code ${Q}$ is an $(I,J)$ QLRC.

Given two copies of classical LRCs, we can construct a QLRC. To do so, we make use of the parity-check spaces, a choice made out of convenience that does not inhibit the generality of the approach. This was first shown for the single erasure recovery case in \cite[Corollary 34]{golowich2025quantum} and generalized to the multiple erasure recovery setting in \cite[Proposition 28]{galindo2024quantumrdeltalocallyrecoverablecodes}. In particular, 
the quantum code \(\operatorname{CSS}(C_X,C_Z)\) is an
\((r,\delta)\) \emph{QLRC} if, for every coordinate
\(i\in[n]\), there exist \(\delta-1\) pairs of parity checks $(c_{x,j},c_{z,j})\in C_X^\perp \times C_Z^\perp, j \in [\delta-1]$
such that $i\in \operatorname{supp}(c_{x,j})\cap \operatorname{supp}(c_{z,j})$
and $\left|\operatorname{supp}(c_{x,j})\cup
\operatorname{supp}(c_{z,j})\right|
\le r+\delta-1$ for each \(j \in [\delta-1]\).

In the next section, we extend quantum local recovery to the entanglement-assisted setting.

\section{Entanglement-assisted codes with local recovery} \label{sec:EALRC}

In this section, we introduce EAQLRCs with availability by extending the notion of quantum local recovery to the entanglement-assisted setting. In contrast to the CSS framework, this approach does not require a dual-containing pair of classical codes, and therefore allows greater flexibility in constructing QLRCs from pairs of classical LRCs.

\begin{definition} \label{def:EAQLRC}
Let $E_B$ denote the receiver's set of ebits so that $|E_B| = c$. The 
entanglement-assisted quantum code $Q=\textup{EACSS}(C_1,C_2)
\subseteq (\mathbb{C}^q)^{\otimes n}$
is an {entanglement-assisted quantum locally recoverable code (EAQLRC)
with locality $r$} provided that it is 
constructed from two classical linear codes $C_1,C_2\subseteq \mathbb{F}_q^n$ together with a set of recovery channels $\textup{Rec}_i$ satisfying the following properties:
\begin{enumerate}
    \item For each coordinate $i\in[n]$, there exists a set
$R_i\subseteq[n]$ with $i\in R_i$ and $|R_i|\le r+1$ such that
\[
\mathrm{Rec}_i:
M\!\left((R_i\setminus\{i\})
\cup E_B \right)
\longrightarrow
M\!\left(R_i \right)
\] with input qudits $R_i\setminus \{i\}$ and output qudits $R_i$.

\item For every code state $\rho\in Q$, let $\widetilde{\rho}$ denote the corresponding joint state of the transmitted qudits together with the receiver's ebits. Then 
\[
(\mathrm{Rec}_i\otimes \mathrm{id}_{[n]\setminus R_i})
\left(\widetilde{\rho}_{[n]\setminus\{i\},\,E_B}\right)
=
\widetilde{\rho}_{[n]}.
\]
\end{enumerate}
\end{definition}

\begin{remark}
The EAQLRC has parameters $[[n,k,d;c]]$,
where $k=n-(k_1+k_2)+c$, $c=\dim(C_1)-\dim(\mathrm{Hull}_{C_2}(C_1)),$ and $k_i=\dim(C_i)$ for $i=1,2$ and has locality $r$. When $E_B=\emptyset$, \Cref{def:EAQLRC} recovers the definition of QLRC given in \cite{golowich2025quantum}. It is worth noting that the  transmitted subsystem of the entanglement-assisted quantum code \( Q=EACSS(C_1,C_2) \) lies in $(\mathbb{C}^q)^{\otimes n}$, while recovery maps may also access the noiseless subsystem $E_B$. In particular,  \(E_B\) is used as a noiseless auxiliary system during recovery and is not part of the transmitted output subsystem.
\end{remark}

Next, we generalize the definition of $(r,\delta)$ QLRC introduced in \cite{galindo2024quantumrdeltalocallyrecoverablecodes} to the availability setting using EAQECCs. 

\begin{definition}\label{def:r-delta-QLRC}
Let $r,\delta \in \mathbb{Z}_{>0}$ with $\delta \ge 2$. 
An entanglement-assisted quantum error-correcting code 
$Q \subseteq (\mathbb{C}^q)^{\otimes n}$ 
is called an \emph{$(r,\delta)$ EAQLRC} if for each coordinate 
$i \in [n]$, there exists a subset 
$\Delta \subseteq [n]$ containing $i$ such that 
\[
|\Delta| \le r + \delta - 1,
\]
and for all subsets $R \subseteq \Delta$ with 
$|R| \le \delta - 1$, the code $Q$ is an $(R,\Delta)$ QLRC. \\

We say that $Q$ has availability $t$ if, for every $i\in[n]$, there exists a set \( J_i=\{i_1,\ldots,i_{\delta-1}\}\subseteq[n] \) with $i\in J_i$, and there exist $t$ recovery sets \( \Delta_1,\ldots,\Delta_t \subseteq[n] \) such that for all \(a\ne b\), \[ \Delta_a\cap \Delta_b=J_i\] and for all \( \ell\in[t]\), \[ |\Delta_\ell|\le r+\delta-1. \] Moreover, for every subset $R_\ell\subseteq \Delta_\ell$ with $|R_\ell|\le \delta-1$, the code $Q$ is an $(R_\ell,\Delta_\ell)$ QLRC.
\end{definition}

The following result provides a sufficient condition for an EAQECC to be an $(r, \delta)$ EAQLRC with availability $t$. 

\begin{proposition} [$(r, \delta)$ EAQLRC with availability] \label{cor:EA-Qc}
Let \(C_1,C_2\) be two classical \((r,\delta)\) LRCs, and let \(Q = \textup{EACSS}(C_1,C_2)\). Assume that for every \(i\in [n]\) there exists a set \(R_i\subseteq [n]\) with \(i\in R_i\) and \(|R_i|\le r+\delta-1\) such that there are \(\delta-1\) pairs of parity checks
\(
(c'_{1j},c'_{2j}) \in C_1 \times C_2\), \(j \in [\delta-1]
\)
satisfying
\[
i\in \supp(c'_{1j})\cap \supp(c'_{2j})
\quad\text{and}\quad
\supp(c'_{1j})\cup \supp(c'_{2j}) \subseteq R_i
\]
for all \(j=1,\dots,\delta-1\). If the code \(\textup{EACSS}(C_1|_{R_i},C_2|_{R_i})\) has minimum distance at least \(\delta\), then \(Q\) is an \((r,\delta)\) EAQLRC. If \(C_1\) and \(C_2\) have availability \(t\), then \(Q\) also has availability \(t\). In particular, $Q$ is an $(r,\delta)$ EAQLRC with availability $t$ if the underlying classical codes are $(r,\delta)$ LRCs with availability $t$.
\end{proposition}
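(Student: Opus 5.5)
Fix $i\in[n]$ and the set $R_i$ from the hypothesis. The plan is to reduce everything to a single local statement: the restricted code $Q_{R_i}:=\textup{EACSS}(C_1|_{R_i},C_2|_{R_i})$, with the receiver's ebits $E_B$ adjoined, corrects any erasure pattern $R\subseteq R_i$ with $|R|\le\delta-1$. A recovery channel that acts only on $R_i\cup E_B$ then gives the $(R,R_i)$ QLRC property for $Q$. The availability claim follows by running the same argument once for each of the $t$ classical recovery sets $I_1(J_i),\dots,I_t(J_i)$. Their pairwise intersections are exactly $J_i$, which is what Definition~\ref{def:r-delta-QLRC} requires, so nothing new is needed beyond the single-set case.

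\textbf{Step 1 (local stabilizer structure).} Regard $C_1,C_2$ as check spaces, as in Definition~\ref{def:EACSS_check_space}. Each check $c'_{1j}$ (resp.\ $c'_{2j}$) lifts to an element of the augmented space $\widetilde C_1$ (resp.\ $\widetilde C_2$) on $n+c$ coordinates. The lift agrees with $c'_{1j}$ on $[n]$ and may have extra support only on $E_B$. Since $\supp(c'_{1j})\cup\supp(c'_{2j})\subseteq R_i$, the corresponding $X$- and $Z$-type stabilizers act only on $R_i\cup E_B$. I will take the subgroup generated by all stabilizers of $\widetilde Q$ supported in $R_i\cup E_B$; in particular it contains the lifts of $c'_{1j},c'_{2j}$. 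This subgroup defines the augmented CSS code $\widetilde Q_{R_i}$ realizing $\textup{EACSS}(C_1|_{R_i},C_2|_{R_i})$ on $R_i\cup E_B$. The point to verify is that the restricted checks $C_1|_{R_i}$ and $C_2|_{R_i}$, together with the ebit columns, satisfy $\widetilde C_1|_{R_i\cup E_B}\subseteq (\widetilde C_2|_{R_i\cup E_B})^\perp$. This holds because any two checks supported in $R_i\cup E_B$ are orthogonal on all $n+c$ coordinates, and therefore orthogonal on $R_i\cup E_B$.

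\textbf{Step 2 (local error correction).} By hypothesis $Q_{R_i}$ has minimum distance at least $\delta$. The Knill--Laflamme conditions therefore hold for every Pauli supported on any $R\subseteq R_i$ with $|R|\le\delta-1$. The erasure positions are known, so this gives a recovery channel $\mathcal R$ acting on $(R_i\setminus R)\cup E_B$ that restores $R$. Concretely, $\mathcal R$ measures the local syndromes, i.e.\ the lifts of the checks $c'_{1j},c'_{2j}$ and the rest of the local check group. It then applies the correction supported on $R$, because the syndrome determines the error up to a local stabilizer.

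\textbf{Step 3 (lifting to $Q$).} I then show that applying $\mathcal R\otimes\mathrm{id}_{[n]\setminus R_i}$ to $\widetilde\rho_{[n]\setminus R,E_B}$ returns $\widetilde\rho_{[n]}$ for every code state $\rho$ of $Q$. Take an erasure supported on $R$ and a state in $\widetilde Q$. The syndromes of the local stabilizers are the only information $\mathcal R$ uses, and every local stabilizer fixes all of $\widetilde Q$. Since $\widetilde Q\subseteq\widetilde Q_{R_i}\otimes(\text{rest})$, the standard argument applies: the correction returns each Pauli error on $R$ to the code space up to a stabilizer. By linearity this extends to arbitrary erasure channels (Kraus operators on $R$), which gives item~2 of Definition~\ref{def:EAQLRC} in the $(R,R_i)$ form.

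\textbf{Expected obstacle.} The delicate point is Step 1, specifically how the ebit coordinates interact with restriction. The relative hull of $C_1|_{R_i}$ and $C_2|_{R_i}$ may be smaller than that of $C_1,C_2$, so the local code may need more ebits than the lifts of $c'_{1j},c'_{2j}$ carry on $E_B$. My approach is to define the local code directly as the subgroup of global stabilizers supported in $R_i\cup E_B$, rather than re-augmenting from scratch. I will read the distance hypothesis on $\textup{EACSS}(C_1|_{R_i},C_2|_{R_i})$ as a statement about this code, which needs at most $c$ ebits. Local recovery is then legitimate because $\mathcal R$ accesses all of $E_B$.
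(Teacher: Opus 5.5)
Your route matches the paper's: build the local entanglement-assisted stabilizer code on $R_i\cup E_B$, use its distance to get Knill--Laflamme for Paulis on any $R\subseteq R_i$ with $|R|\le\delta-1$, recover by local syndrome measurement, and repeat over the $t$ sets. Steps~2--3 are sound for the code you actually define. The gap is the bridge in Step~1. The stabilizers of $\widetilde Q$ supported in $R_i\cup E_B$ come from the \emph{shortened} check spaces $(C_\ell)_{R_i}:=\{\pi_{R_i}(x): x\in C_\ell,\ \supp(x)\subseteq R_i\}$. By contrast, $C_\ell|_{R_i}=\pi_{R_i}(C_\ell)$ is the \emph{punctured} check space, which also contains restrictions of checks that leave $R_i$. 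Your orthogonality argument covers only the former; restrictions of the latter are in general not orthogonal on $R_i\cup E_B$. The discrepancy is not just an ebit count. Since $(\pi_{R_i}(C_1))^\perp$ is the shortening of $C_1^\perp$, the undetectable $Z$-errors of $\textup{EACSS}(C_1|_{R_i},C_2|_{R_i})$ are those on $R_i$ that commute with \emph{all} global $X$-checks. For your local code they instead form $((C_1)_{R_i})^\perp=\pi_{R_i}(C_1^\perp)$, a larger set. Hence the stated hypothesis does not give the distance bound Step~2 invokes: a low-weight error on $R_i$ may be caught only by a check that leaves $R_i$, and no channel on $R_i\cup E_B$ can measure that check. ``Reading'' the hypothesis as a statement about your code substitutes a different assumption rather than using the stated one. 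The paper's proof makes the same tacit identification.

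What Step~2 really needs on a set $I$ is $d(\pi_I(C_1^\perp))\ge\delta$ and $d(\pi_I(C_2^\perp))\ge\delta$. That is, the codes $C_1^\perp,C_2^\perp$, whose parity checks are the elements of $C_1,C_2$, must be classically $(r,\delta)$-local on $I$. Granting this, take a nonidentity Pauli $X^aZ^b$ on $I$ of weight $<\delta$ with, say, $b\neq 0$. Then $b\notin\pi_I(C_1^\perp)=((C_1)_I)^\perp$, so some $x\in C_1$ with $\supp(x)\subseteq I$ has $b\cdot x\neq 0$. The lift of $x$ is supported in $I\cup E_B$ and detects the error; the case $a\neq 0$ is symmetric using $C_2$. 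The ebit columns play no role, and your Steps~2--3 then go through on $I$. This is also what your availability paragraph is missing. ``Nothing new is needed beyond the single-set case'' is not right, because the statement supplies a local distance hypothesis only for the one set $R_i$. For each of the $t$ sets $I_a(J_i)$ you must establish it, for example by the argument above. That argument requires $C_1^\perp$ and $C_2^\perp$ to have the \emph{same} $t$ recovery sets, and ``$C_1$ and $C_2$ have availability $t$'' does not give that by itself.
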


\begin{proof}
Fix \(i\in [n]\). Vectors in \(C_1\) define \(X\)-type generators and vectors in \(C_2\) define \(Z\)-type generators. Thus, the pairs
\(
(c'_{1j},c'_{2j})\), \(\ j=1,\dots,\delta-1,
\)
determine an entanglement-assisted stabilizer system supported on the coordinates indexed by \(R_i\), with any anticommutation resolved by the shared entanglement. Set
\(
Q_i := \textup{EACSS}(C_1|_{R_i},C_2|_{R_i})
\)
for this local entanglement-assisted code on \(R_i\).

By construction, each pair \((c'_{1j},c'_{2j})\) yields local \(Z\)-type and \(X\)-type stabilizers supported on \(R_i\) and acting nontrivially on coordinate \(i\). As in the proof of \cite[Proposition~33]{golowich2025quantum}, these local generators allow one to detect and correct a single-qudit Pauli error at coordinate \(i\) using the remaining qudits in \(R_i\). Since there are \(\delta-1\) such independent pairs of parity checks and \(Q_i\) has minimum distance at least \(\delta\), the code \(Q_i\) can correct any pattern of up to \(\delta-1\) erasures within \(R_i\). In particular, coordinate \(i\) can be recovered from the qudits indexed by \(R_i\).
Because this holds for every \(i\in [n]\), the code \(Q\) is an \((r,\delta)\) EAQLRC.

Now suppose 
\(C_1\) and \(C_2\) have availability \(t\), witnessed for each \(i\in [n]\) by recovery sets
\(
R_{i,1},\dots,R_{i,t}\subseteq [n]
\)
together with a fixed set
\[
J_i=\{i_1,\dots,i_{\delta-1}\}\subseteq[n],
\qquad i\in J_i,
\]
such that
\[
R_{i,\ell}\cap R_{i,\ell'}=J_i
\qquad\text{for all }\ell\neq \ell'.
\]
Notice that the recovery sets $R_{i,1},\ldots, R_{i,t}$ are not assumed to be pairwise disjoint outside $i$; their common intersection is exactly the set $J_i$ of size $\delta-1$. Since each local code $Q_{i,\ell}$ has minimum distance at least $\delta, $ erasures on the common overlap $J_i$ are within the local erasure capability. Thus the recovery sets $R_{i,1},\ldots, R_{i,t}$ give \(t\) distinct local recovery procedures
whose overlaps occur only on the \(\delta-1\) coordinates tolerated by the
local distance condition. Hence \(Q\) has availability \(t\).

\end{proof}

We now provide a Singleton-like bound on EAQLRCs. In particular, the next theorem extends existing Singleton-like bounds for QLRCs to the entanglement-assisted setting for those defined using a CSS-like construction, where the entanglement parameter $c$ also plays a role in the tradeoff among length, dimension, and local recoverability.

\begin{theorem}[Singleton-like bound for EAQLRCs with availability] \label{thm:SingletonboundEAavail}
Let $Q$ be an EAQLRC of block length $n$, dimension $k>0$, distance $d$, and locality parameters $(r,\delta)$ with availability $t$. Define
\[
s_1 \;:=\; \left\lfloor \frac{n-(d-1)}{t(r+\delta-1)} \right\rfloor
\]
and
\[
s_2 \;:=\; \left\lfloor \frac{\,n-2(d-1)-s_1\,}{t(r+\delta-1)} \right\rfloor.
\]
Then
\[
k \;\le\; n - 2(d-1) - s_1 - s_2 + c.
\]
\rmv{Equivalently,
\[
k \;\le\; n - 2(d-1)
      - \left\lfloor \frac{n-(d-1)}{t(r+\delta-1)} \right\rfloor
      - \left\lfloor \frac{\,n-2(d-1)-\left\lfloor \frac{n-(d-1)}{t(r+\delta-1)} \right\rfloor\,}{t(r+\delta-1)} \right\rfloor
      + c.
\]}
\end{theorem}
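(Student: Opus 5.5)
The proof follows the strategy behind \cite[Theorem 35]{golowich2025quantum}: repeatedly find a set of qudits that is correctable, or recoverable from qudits already accounted for, and track entropies. The one new ingredient is the receiver's ebit system $E_B$. Let $\rho$ be the maximally mixed code state, purified by a reference system $R$ with $S(R)=k\log q$. All the local and global recovery maps are allowed to read $E_B$. The same entropic inequalities then hold once every conditioning set is enlarged by $E_B$, and $|E_B|=c$ contributes at most $c\log q$. That is where the additive $+c$ in the bound comes from.

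\emph{Step 1 (a large locally recoverable set).} Greedily pick coordinates $i_1,i_2,\dots$. At stage $m$ take all $t$ recovery sets $\Delta_1(i_m),\dots,\Delta_t(i_m)$ from \Cref{def:r-delta-QLRC}; their union has size at most $t(r+\delta-1)$. Add the erased set $J_{i_m}$ to a growing set $T$ and add the remaining qudits of the union to a set $U$. Continue as long as $|T\cup U|\le n-(d-1)$. This yields $s_1=\lfloor (n-(d-1))/(t(r+\delta-1))\rfloor$ stages, hence at least $s_1$ qudits in $T$ that can be reconstructed from $U\cup E_B$. In particular $S(T\mid U E_B)=0$.

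\emph{Step 2 (using the distance).} On the complement of $T\cup U$, choose a set $A$ of size $d-1$. Because $A$ is correctable given $E_B$, we have $I(R:A\mid E_B)=0$, i.e.\ $S(RA E_B)=S(R)+S(AE_B)$. This is exactly the property giving $k\le n-2(d-1)+c$ in \cite{Grassl_2022}. Next repeat the greedy locality step on the remaining $n-2(d-1)-s_1$ coordinates. This produces a second batch of at least $s_2$ locally recoverable qudits, whose recovery sets stay inside the part not yet used. Combining subadditivity, $S(RA E_B)=S(R)+S(AE_B)$ on both correctable blocks, and the vanishing conditional entropies of the $s_1+s_2$ recoverable qudits should give
\[
k\log q \le \bigl(n-2(d-1)-s_1-s_2\bigr)\log q + S(E_B)\le \bigl(n-2(d-1)-s_1-s_2+c\bigr)\log q .
\]
The bookkeeping mirrors the proof of \cite[Theorem 35]{golowich2025quantum}, with $r+1$ replaced by $t(r+\delta-1)$ and each entropy conditioned on $E_B$.

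\emph{Main obstacle.} The hard part is the role of $t$. Disjointness outside $J_i$ means all $t$ recovery sets of a chosen coordinate must be charged together, which is why the denominator is $t(r+\delta-1)$ rather than $r+\delta-1$. I still have to check that counting the whole union is legitimate, i.e.\ that sets $\Delta_\ell$ overlapping across different stages never let a recovered qudit depend on a qudit that is erased later. I would first try processing the stages in reverse order, so that each $J_{i_m}$ is recovered only from qudits in $U$ or from qudits recovered earlier. I also need to confirm that conditioning on $E_B$ costs only $c$ once overall, not once per block; I expect this to follow by writing the whole chain of inequalities with a single $E_B$ appended throughout.
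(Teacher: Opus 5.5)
\textbf{There is a genuine gap.} Your outline has the right architecture: greedy batches charged at $t(r+\delta-1)$ per stage, two $(d-1)$-sets, and $+c$ coming from $E_B$. However, both entropic facts you plan to combine are false in the quantum setting, and the step that actually produces the bound is left as ``should give''.

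\emph{First identity.} Correctability of $A$ does not give $I(R:A\mid E_B)=0$. Since $\rho_{RE_B}=\rho_R\otimes\rho_{E_B}$, that identity is equivalent to $I(R:AE_B)=0$. This means recoverability after losing $A$ \emph{and} the receiver's ebits, which typically fails for genuinely entanglement-assisted codes. For example, take $\textup{EACSS}(C,C)$ with $C=\langle(1,1,1)\rangle\subseteq\F_2^3$. This is the $[[4,2,2]]$ code with its fourth qubit held by the receiver, i.e.\ a $[[3,2,2;1]]$ code. For one transmitted qubit $A$ one finds $S(RAE_B)=2$, while $S(R)+S(AE_B)=4$. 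Correctability with $E_B$ available means $I(R:A)=0$, with $E_B$ on the complementary side; that is what underlies the bound of \cite{Grassl_2022}.

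\emph{Second identity.} $S(T\mid UE_B)=0$ is the classical ``determined by'' statement. Quantumly, recoverability of $T$ from $UE_B$ means $T$ decouples from everything outside $TUE_B$. For the global pure state this gives $S(T\mid UE_B)=-S(T)$, which is generally nonzero.

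\emph{Missing idea.} What the paper's proof does is merge each recovered batch with its $(d-1)$-set into one correctable block $X_b=T_b\cup A_b$. From the remaining transmitted qudits plus $E_B$, you first recover $T_b$ locally and then correct $A_b$ globally, so $I(R:X_b)=0$. The joint state on $R$, the $n$ transmitted qudits and $E_B$ is pure, so subadditivity gives $S(R)+S(X_1)=S(RX_1)=S(X_2WE_B)\le S(X_2)+S(W)+S(E_B)$, and symmetrically with $1$ and $2$ swapped. Adding the two cancels $S(X_1)+S(X_2)$ and gives $k\le |W|+c$ with $|W|=n-2(d-1)-s_1-s_2$. So $c$ is paid only once, which settles your second worry.

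\emph{Ordering.} Recover in the order of construction, not in reverse. If each new $i_m$ is chosen outside all previously charged coordinates, no earlier recovery set contains a later recovered qudit, whereas later sets may contain earlier ones. Put only $i_m$ (or $J_{i_m}\setminus(T\cup U)$) into $T$. Inserting all of $J_{i_m}$ can turn a helper of an earlier stage into an erased qudit, which reordering alone cannot repair.

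\emph{Smaller points.} Charging the whole union of the $t$ sets is harmless over-reservation; one recovery set per stage would already suffice. The second-phase recovery sets also need not stay in the unused part. They may meet $X_1$, whose qudits are present when $X_2$ is recovered.
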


\begin{proof}
We follow the same general strategy as in the QLRC case, though in this setting we must keep track of the entropy contribution coming from the shared entanglement. We construct two disjoint families of locally recoverable coordinates, together with two sets of size $d-1$ that play the role of globally correctable erasures, and then apply an entropy argument to the resulting partition of the physical qudits.

Let $E$ denote the $c$ qudits corresponding to the halves of the shared maximally entangled pairs held by the decoder. Let $A$ be a $k$-qudit reference system maximally entangled with the logical space. The encoding map acts on one half of the entangled logical state together with the encoder's halves of the $c$ ebits, and the resulting global state on $AQE$ is pure.

We partition the code coordinates as
\[
Q = S_1 \sqcup V_1 \sqcup S_2 \sqcup V_2 \sqcup W.
\]
For notational convenience, we denote $A \sqcup B$ by $AB$. We first construct $S_1$ and $V_1$, and then repeat the argument on the remaining coordinates to obtain $S_2$ and $V_2$.
 To construct this partition, we first construct $S_1,V_1$ as follows:
  \begin{enumerate}
  \item Initialize sets $S_1=\emptyset$ and $T_1=\emptyset$.
  \item\label{it:setS1} Repeat the following step \[ s_1=\left\lfloor\frac{n-(d-1)}{t(r+\delta-1)}\right\rfloor \] times. Choose some \[ i\in Q\setminus S_1 T_1, \] add \(i\) to \(S_1\), and let \[ I_{i,1},\ldots,I_{i,t} \] be the \(t\) available recovery sets for \(i\). Add to \(T_1\) all coordinates in the union of these recovery sets, excluding \(i\) and any coordinates already placed in \(S_1\): \[ T_1\leftarrow T_1\cup \left( \bigcup_{\ell=1}^t I_{i,\ell}\setminus(\{i\}\cup S_1) \right). \]
  \item\label{it:setV1} Set $V_1$ to be any $(d-1)$-element subset of $Q\setminus S_1 T_{1}$.
  \end{enumerate}
The above procedure is guaranteed to successfully output \(S_1,V_1\) with \(\mid S_1\mid =s_1\) and \(\mid V_1\mid =d-1\). Indeed, each iteration in Step~\ref{it:setS1} adds one element to \(S_1\) and adds at most \(t(r+\delta-1)\) coordinates to \(S_1 T_1\). Therefore, after all \(s_1\) iterations, \[ \mid Q\setminus(S_1T_1)\mid  \ge n-t(r+\delta-1)s_1 \ge d-1. \] Thus there are enough remaining coordinates to choose \(V_1\subseteq Q\setminus S_1T_1\) with \(\mid V_1\mid =d-1\).

  Let $Q_1=Q\setminus S_1V_1$. For a code state $\rho\in Q$, given access to the restriction $\rho_{Q_1}=\textup{Tr}_{Q\setminus Q_1}\rho$ to components in $Q_1$, we may by construction pass through all $i\in S_1$ one-by-one in the same order $S_1$ was constructed, and apply $\textup{Rec}_i$ at each step, to recover $\rho_{Q_1S_1}$. Then as $Q\setminus Q_1S_1=V_1$, which has size $d-1$, a global decoding algorithm for the EAQLRC recovers $\rho$ from $\rho_{Q_1S_1}$. Thus the code components in $Q_1$ can be used to completely recover the entire codeword. It follows that the reduced density matrix $\rho_{Q\setminus Q_1}=\rho_{S_1V_1}$ is the same for all code states $\rho \in Q$. 

  Now if $\mid Q_1 \mid <d-1$, then the global decoding algorithm for the EAQLRC can recover a code state from components in $Q\setminus Q_1=S_1V_1$. But as we showed above that a code state can also be recovered from components in $Q_1$, the code must have dimension $k=0$, as otherwise we would be able to clone a code state by breaking it into the parts $S_1V_1$ and $Q_1$, and recovering the entire state from each part. But we assumed $k>0$, a contradiction, so it must be that $\mid Q_1\mid \geq d-1$.
  
  We then construct $S_2,V_2$ using a similar procedure as above, but on the restriction to components in $Q_1$:
  \begin{enumerate}
  \item Initialize sets $S_2=\emptyset$ and $T_2=\emptyset$.
  \item\label{it:setS2} Repeat the following step \[ s_2= \left\lfloor \frac{\mid Q_1\mid -(d-1)}{t(r+\delta-1)} \right\rfloor = \left\lfloor \frac{n-2(d-1)-s_1t(r+\delta-1)}{t(r+\delta-1)} \right\rfloor \] times. Choose some \[ i\in Q_1\setminus S_2 T_2, \] add \(i\) to \(S_2\), and let \[ I_{i,1},\ldots,I_{i,t} \] be the \(t\) available recovery sets for \(i\). Add to \(T_2\) all coordinates in the union of these recovery sets, excluding \(i\) and any coordinates already placed in \(S_2\): \[ T_2\leftarrow T_2\cup \left( \bigcup_{\ell=1}^t I_{i,\ell}\setminus(\{i\}\cup S_2) \right). \]
  \item\label{it:setV2} Set $V_2$ to be any $(d-1)$-element subset of $Q_1\setminus S_2 T_{2}$.
  \end{enumerate}
The procedure above is guaranteed to successfully output \(S_2 \), \(V_2\) with \( \mid S_2 \mid=s_2\) and \( \mid V_2 \mid=d-1\). Indeed, each iteration of Step~(\ref{it:setS2}) adds one element to \(S_2\) and adds at most \(t(r+\delta-1)\) coordinates to \(S_2T_2\). Hence, after all \(s_2\) iterations, \[ \mid Q_1\setminus S_2 T_2 \mid \ge \mid Q_1 \mid-t(r+\delta-1)s_2 \ge d-1. \] Thus we may choose \(V_2\subseteq Q_1\setminus(S_2\ T_2)\) with \(\mid V_2 \mid =d-1\). Letting $Q_2=Q\setminus S_2V_2$, then by similar reasoning as above for $Q_1$, the code components in $S_2V_2$ can be recovered from components in $Q_2$, so all code states $\rho\in Q$ have the same reduced density matrix $\rho_{Q\setminus Q_2}=\rho_{S_2V_2}$.

  Now let $W=Q\setminus S_1V_1S_2V_2$, so that we have our desired partition $Q=S_1V_1S_2V_2W$. Let $q$ be the local dimension of $C$, and let $A$ be a set of $k$ additional qudits of local dimension $q$. Define
  \begin{equation}
   \ket{\psi}_{AQE}
= \frac{1}{\sqrt{q^k}}\sum_{x\in[q]_0^k}\ket{x}_A\otimes(\textup{Enc}\ket{x})_{QE}
  \end{equation}
  to be the state obtained by applying the encoding map $\textup{Enc}$ of $C$ to one register from a maximally entangled pair of message states $\sum_{x\in[q]^k}\ket{x}\otimes\ket{x}$. Let $\psi=\ket{\psi}\bra{\psi}$ be the associated density matrix.

 By $(r,\delta)$ locality, for each recovery set $I_{ij}$, the punctured code $C|_{I_{ij}}$ has minimum distance at least $\delta$. Hence any $\delta-1$ erasures within $I_{ij}$ can be recovered from the remaining coordinates in $I_{ij}$. By construction, for each $i \in S_b$, all coordinates of $I_{ij}\setminus\{i\}$
are contained in $Q\setminus (S_b \cup V_b)$, except for at most $\delta-1$
coordinates. Therefore $i$ can be recovered. Thus, for $b=1,2$, the systems in $Q\setminus (S_b \cup V_b)$ determine the full code state. Since $|V_b|=d-1$, the distance  condition implies the reduced density matrices on $S_b \cup V_b$ are independent of the encoded state. Now incorporate entanglement assistance. For any $B \subseteq AQE$, let $S(B)$ denote the von Neumann entropy (base $q$). Using the same entropy identities as in the unassisted case in the proof of \cite[Theorem 35]{golowich2025quantum} but now applied to the pure state on $A Q E$, we obtain
\[
S(A) + S(S_1 \cup V_1)
= S(A \cup S_1 \cup V_1)
= S(S_2 \cup V_2 \cup W \cup E)
\le S(S_2 \cup V_2) + S(W) + S(E),
\]
and symmetrically,
\[
S(A) + S(S_2 \cup V_2)
= S(A \cup S_2 \cup V_2)
= S(S_1 \cup V_1 \cup W \cup E)
\le S(S_1 \cup V_1) + S(W) + S(E).
\]

Here, the inequalities follow from subadditivity of entropy. 
Adding the two inequalities, we obtain
\[
S(A) \le S(W) + S(E).
\]

Since $A$ is maximally entangled with the logical space, $S(A) = k.$
Moreover, $S(E)=c$, because $E$ consists of $c$ maximally mixed qudits arising from $c$ ebits. Finally, as in the unassisted case, $S(W) \le |W|$. 
Therefore, we obtain 
\[
k \le |W| + c.
\]

  Meanwhile, $S(W)\leq|W|=n-|S_1 \cup V_1 \cup S_2 \cup V_2|=n-2(d-1)-s_1-s_2$. Thus
  \begin{equation*}
    k = S(A) \leq S(W) + S(E) \leq n-2(d-1)-\left\lfloor\frac{n-(d-1)}{t(r+\delta-1)}\right\rfloor-\left\lfloor\frac{n-2(d-1)-\left\lfloor\frac{n-(d-1)}{t(r+\delta-1)}\right\rfloor}{t(r+\delta-1)}\right\rfloor + c,
  \end{equation*}
  as desired.
\end{proof}

\begin{remark} Omitting the last term in the bound, we obtain \[
k \leq n-2(d-1)
-\frac{n-(d-1)}{t(r+\delta-1)}
+1+c.
\]
Equivalently,
\[
d-1 \leq
\frac{
\left(1-\frac{1}{t(r+\delta-1)}\right)n-k+c+1
}{
2-\frac{1}{t(r+\delta-1)}
}.
\]
Thus, for fixed locality parameters \(r,\delta\) and  fixed availability \(t\), even as the rate \(k/n\to 0\), the relative distance of an EAQLRC satisfies
\[
\frac{d}{n}
\leq
\frac{1}{2}
-\Omega\!\left(\frac{1}{t(r+\delta-1)}\right) + \mathcal{O}\left(\frac{c}{n}\right).
\]

\end{remark}

Under a stronger disjointness assumption on the local recovery sets, the Singleton-like bound in  \Cref{thm:SingletonboundEAavail} can be sharpened. The following result follows from the definition of EAQLRCs and \cite[Theorem 31]{galindo2024quantumrdeltalocallyrecoverablecodes}. It also applies to QLRCs with availability one.

\begin{corollary}\label{cor:Singleton2}
Let \(Q\) be an EAQLRC of block length \(n\), dimension \(k\), and distance \(d\), with locality parameters \((r,\delta)\). Assume that for each coordinate
\(i \in [n]\), there exists a recovery set \(J(i)\) containing \(i\) such that $|J(i)| = r+\delta-1.$
Furthermore,  assume that for all \(i,j \in [n]\), either $J(i)=J(j)$ or $J(i)\cap J(j)=\emptyset.$
Then
\[
k \leq
n-2(d-1)-\left\lfloor\frac{n-(d-1)}{(r+\delta-1)}\right\rfloor-\left\lfloor\frac{n-2(d-1)-\left\lfloor\frac{n-(d-1)}{(r+\delta-1)}\right\rfloor}{(r+\delta-1)}\right\rfloor + c.
\]
\end{corollary}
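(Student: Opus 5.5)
The plan is to rerun the entropy argument from the proof of \Cref{thm:SingletonboundEAavail} with $t=1$. The disjointness hypothesis on the recovery sets lets us construct the partition
\[
Q=S_1\sqcup V_1\sqcup S_2\sqcup V_2\sqcup W
\]
more economically, and this is where the sharper counting comes from.

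First, observe that the recovery sets $\{J(i)\}$ form a partition of $[n]$ into blocks, each of size exactly $r+\delta-1$. So the greedy procedure selects whole blocks at a time. In the first phase we would pick $s_1=\lfloor (n-(d-1))/(r+\delta-1)\rfloor$ distinct blocks. From each chosen block we place one coordinate $i$ into $S_1$, and the other $r+\delta-2$ coordinates of that block go into $T_1$. Because the blocks are disjoint, coordinates of different blocks never collide. Consequently $|S_1T_1|=s_1(r+\delta-1)\le n-(d-1)$, and there is room to choose $V_1$ of size $d-1$ outside $S_1T_1$.

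Next, we check recoverability. Each $i\in S_1$ can be restored by applying the $(\{i\},J(i))$ recovery channel of \Cref{def:r-delta-QLRC}, together with the receiver's ebits $E_B$ as in \Cref{def:EAQLRC}. This works because $J(i)\setminus\{i\}\subseteq T_1\subseteq Q_1:=Q\setminus S_1V_1$. Global decoding then corrects the $d-1$ erasures on $V_1$. By no-cloning, as in the theorem, $|Q_1|\ge d-1$ and the reduced state on $S_1V_1$ is independent of the code state.

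The second phase repeats this inside $Q_1$. Here the sharpening comes from the block structure. Blocks not touched in the first phase may intersect $V_1$, so we restrict to blocks fully contained in $Q_1$ and choose $s_2=\lfloor (n-2(d-1)-s_1)/(r+\delta-1)\rfloor$ further blocks. Then $S_2\subseteq Q_1$, the sets $T_2$ lie in $Q_1$, and $V_2$ is chosen in $Q_1\setminus S_2T_2$. The point is that, in the ambient coordinates, the first-phase elements of $T_1$ may be reused for the second phase; only the $s_1$ elements of $S_1$ and the $d-1$ elements of $V_1$ are genuinely removed. This yields the $-s_1$ in place of $-s_1(r+\delta-1)$ in the numerator, exactly as in \cite[Theorem 31]{galindo2024quantumrdeltalocallyrecoverablecodes}.

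With the partition in hand, the pure-state entropy computation on $AQE$ carries over verbatim. We add the two subadditivity inequalities, use $S(A)=k$, $S(E)=c$ and $S(W)\le|W|$, and obtain $k\le |W|+c=n-2(d-1)-s_1-s_2+c$.

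The main obstacle is justifying the count for $s_2$. The worry is that many blocks meet $S_1V_1$, so that fewer than $s_2$ full blocks lie in $Q_1$. I would handle this by counting coordinates rather than blocks. Each $S_2$ element needs only its own block minus at most $\delta-1$ erased coordinates to lie in $Q_2$. Since a block meeting $V_1$ or $S_1$ can still serve local recovery when at most $\delta-1$ of its points lie in $S_1V_1$, greedy selection succeeds as long as $|Q_1|-s_2(r+\delta-1)\ge d-1$. I would defer to the cited theorem for the precise form of this step.
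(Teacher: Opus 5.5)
\textbf{There is a genuine gap in your second phase.} Your overall plan matches the paper's: the bound in \Cref{cor:Singleton2} is exactly \Cref{thm:SingletonboundEAavail} with $t=1$, and the paper invokes that entropy argument (via \cite[Theorem 31]{galindo2024quantumrdeltalocallyrecoverablecodes}) with the extra $S(E)=c$. Your first phase and your entropy step are fine. The gap comes from misidentifying what is erased in phase two.

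Restricting to blocks fully contained in $Q_1$ does not merely risk a shortfall; it always fails when $s_2>0$. After your first phase, the points outside $S_1T_1V_1$ number $(n-(d-1)) \bmod (r+\delta-1)<r+\delta-1$. So every block meets $S_1$ or $V_1$, and no block lies in $Q_1$.

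Your fallback says a block is usable if at most $\delta-1$ of its points lie in $S_1V_1$. This treats $S_1\cup V_1$ as erased during phase two. Under that premise nothing need be usable: for $\delta=2$, every block containing a candidate $i\in Q_1$ already has a point of $S_1V_1$ besides $i$. So the count $|Q_1|-s_2(r+\delta-1)\ge d-1$ does not follow from your reasoning, and deferring to the citation leaves the step unproved.

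\textbf{The missing observation is that in phase two only $S_2\cup V_2$ is erased.} Since $S_2,V_2\subseteq Q_1$, the sets $S_1$ and $V_1$ lie in $Q_2:=Q\setminus S_2V_2$, the accessible part. Hence $J(i)$ for $i\in S_2$ may meet them freely.

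Run the greedy step inside $Q_1$ exactly as in phase one. Pick $i\in Q_1\setminus S_2T_2$ and add $J(i)\setminus(\{i\}\cup S_2)$ to $T_2$, allowing $T_2$ to leave $Q_1$. Since $V_2\subseteq Q_1\setminus S_2T_2$, we have $T_2\subseteq Q_2$. When $\mathrm{Rec}_i$ is applied, $J(i)\setminus\{i\}$ consists of earlier elements of $S_2$ (already restored) and points of $T_2$. So $i$ is the only missing coordinate.

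Each step removes at most $r+\delta-1$ points of $Q_1$. Therefore $|Q_1\setminus S_2T_2|\ge |Q_1|-s_2(r+\delta-1)\ge d-1$, using $|Q_1|=n-s_1-(d-1)$.

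This argument uses neither the disjointness of the $J(i)$ nor $|J(i)|=r+\delta-1$ with equality. The term $-s_1$ (rather than $-s_1(r+\delta-1)$) comes solely from the fact that $T_1\subseteq Q_1$ is not erased. So your claim that disjointness is ``where the sharper counting comes from'' is mistaken: that hypothesis plays no role in the bound.
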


From the previous result, one can bound $(r,\delta)$ EAQLRCs with stronger assumptions.

\section{Random constructions of EAQLRCs} \label{sec:random}
In this section, we provide EAQLRC constructions using a Vandermonde matrix similar to one constructed  in \cite[Construction 1]{guruswami2026quantumhierarchicallocallyrecoverable}. 

\begin{construction}\label{cons:construction2} Fix a block length $n,$ a locality parameter $r$, a distance parameter $\delta \leq r$ and a finite field $\mathbb{F}_q$ with $q \geq n$. Assume that $m := n/(r + \delta -1) \in \mathbb{Z}$. We will define a random EAQLRC by using the entanglement-assisted construction giving rise to $Q = \textup{EACSS}(C_X,C_Z)$, which is sampled as follows. Initialize the matrix $G_i \in \mathbb{F}_q^{(\delta -1)\times (r+\delta - 1)}$ as the Vandermonde matrix \[G_i = \begin{bmatrix} 
1 & 1 & \ldots & 1\\
\alpha_{i,1} & \alpha_{i,2} & \ldots & \alpha_{i,r+\delta - 1}\\
\vdots & \vdots & \ddots & \vdots \\
\alpha_{i, 1}^{\delta -2} & \alpha_{i,2}^{\delta - 2} & \ldots & \alpha_{i, r + \delta - 1}^{\delta - 2}\\
\end{bmatrix}\] where for each $i = 1, \ldots m$, every $\alpha_{i,j} \in \F_q$ is distinct for each $j = 1, \ldots, r + \delta -1$. Define $H_X$ as 
\[H_X = \begin{bmatrix}
    G_1 & & & \\
    & G_2 & & \\
    & & \ddots & \\
    & & & G_m
\end{bmatrix}.\] Let $\beta_{i,l} = \prod_{j \neq l} \frac{1}{\alpha_{i,l} - \alpha_{i,j}}$ for $l = 1, \ldots, r+ \delta - 1$. Define $H_i$ as the matrix 

\[H_i = \begin{bmatrix} 
1 & 1 & \ldots & 1\\
\alpha_{i,1} & \alpha_{i,2} & \ldots & \alpha_{i,r+\delta - 1}\\
\vdots & \vdots & \ddots & \vdots \\
\alpha_{i, 1}^{r - 1} & \alpha_{i,2}^{r - 1} & \ldots & \alpha_{i, r + \delta - 1}^{r - 1}\\
\end{bmatrix} \begin{bmatrix}
    \beta_{i, 1} & & & \\
    & \beta_{i,2} & & \\
    & & \ddots & \\
    & & & \beta_{i, r + \delta -1 }
\end{bmatrix}\] and define $H_Z$ as 

\[H_Z = \begin{bmatrix}
    (H_1)_{\delta -1} & & & \\
    & (H_2)_{\delta - 1} & & \\
    & & \ddots & \\
    & & & (H_m)_{\delta - 1}
\end{bmatrix},\] where $(H_i)_{\delta -1}$ represents the first $\delta -1$ rows of $H_i$. 
We then add $\ell$ random rows to $H_X$ and $H_Z$, as follows:
\begin{enumerate}
    \item for $i = 1, \ldots, \ell$:
        \begin{enumerate}
            \item let $v$ be a uniformly random vector sampled from $\F_q^n$ such that $v \not\in \mathrm{rowspan}(H_X)$.
    
            \item append $v$ to $H_X$ and rename the resulting matrix as $H_X$ i.e. $H_X = \begin{bmatrix}
                H_X\\v
            \end{bmatrix}$.
        \end{enumerate}
    \item for $i = 1, \ldots, \ell$:
        \begin{enumerate}
            \item let $v$ be a uniformly random vector sampled from $\F_q^n$ such that $v \not\in \mathrm{rowspan}(H_Z)$.

            \item append $v$ to $H_Z$ and rename the resulting matrix as $H_Z$  i.e. $H_Z = \begin{bmatrix}
            H_Z\\v
            \end{bmatrix}$.
        \end{enumerate}
\end{enumerate}
We have sampled matrices $H_X, H_Z \in \mathbb{F}_q^{(m(\delta-1) + \ell) \times N}$. Let $C_X = \textup{ker}(H_X)$ and $C_Z = \textup{ker}(H_Z)$ and $Q = \textup{EACSS}(C_X,C_Z)$.
\end{construction} 

The following result describes the properties of the quantum code $Q$. 

\begin{proposition} [Random $(r, \delta)$ EAQLRC]
\label{prop:random_eaqlrc} Fix positive integers $n$ and $r$ along with a finite field $\mathbb{F}_q$ such that $q \geq n$ and an integer $\delta \leq r$ such that $m := n/(r + \delta -1) \in \mathbb{Z}$.  
Then the code $Q$ presented in \Cref{cons:construction2} is an EAQLRC with locality $r$ and dimension $n - 2(m(\delta - 1)+\ell) + c$, where $c$ is as in \Cref{def:EACSS_check_space} and other parameters given by \Cref{rem:EACSS_params}. Each  coordinate $i \in [n]$
has recovery set $J_i=
\{
a(r+\delta-1)+1,\ldots,
a(r+\delta-1)+r+\delta-1
\},$
where $a=\left\lceil \frac{i}{r+\delta-1}\right\rceil-1.$
\end{proposition}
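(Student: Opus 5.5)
The plan is to split the proof into three independent pieces: a dimension count, an identification of the local structure of $C_X$ and $C_Z$ on each block, and an application of \Cref{cor:EA-Qc} (equivalently, a direct local check of \Cref{def:EAQLRC}) on each block $J_i$.

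\textbf{Dimension.} By construction, $H_X$ and $H_Z$ have full row rank.
\begin{itemize}
\item The block-diagonal part has rank $m(\delta-1)$. Each $G_i$ is a Vandermonde matrix with distinct nodes and $\delta-1\le r+\delta-1$ rows. Each $(H_i)_{\delta-1}$ is such a Vandermonde matrix times an invertible diagonal matrix, since every $\beta_{i,l}\neq 0$.
\item Each appended random row is chosen outside the current row span, so it raises the rank by one.
\end{itemize}
Hence $\dim C_X^\perp=\dim C_Z^\perp=m(\delta-1)+\ell$. Here we read the check-space convention of \Cref{def:EACSS_check_space} with the row spaces of $H_X,H_Z$ as the check spaces, of dimensions $k_X=k_Z=m(\delta-1)+\ell$. \Cref{rem:EACSS_params} then gives dimension $n-k_X-k_Z+c=n-2(m(\delta-1)+\ell)+c$, with $c$ computed from the relative hull as in \Cref{def:EACSS_check_space}. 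The distance bound is read off from the same remark.

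\textbf{Local structure.} Fix $i$ and let $J_i$ be its block of $r+\delta-1$ consecutive coordinates, indexed by $a$ as in the statement. Write $s=r+\delta-1$. The rows of $G_{a+1}$ and $(H_{a+1})_{\delta-1}$, extended by zero, lie in the respective check spaces, are supported exactly on $J_i$, and give $\delta-1$ pairs of checks $(c'_{1j},c'_{2j})$ with $i$ in both supports. This is because a Vandermonde row and its $\beta$-scaled version have no zero entries.

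The key algebraic fact is the classical GRS duality. For distinct nodes, the code generated by $(\alpha^u)_{u<s-r'}$ scaled by $\beta_l=\prod_{j\ne l}(\alpha_l-\alpha_j)^{-1}$ is the dual of the Reed--Solomon code of dimension $r'$, that is, $\sum_l \beta_l\alpha_l^{e}=0$ for $e\le s-2$. I will use this in two ways:
\begin{itemize}
\item The local check spaces $\mathrm{row}(G_{a+1})$ and $\mathrm{row}((H_{a+1})_{\delta-1})$ are MDS $[s,\delta-1]$ codes. Their duals, which are the punctured local codes, are MDS $[s,r]$ codes with distance $\delta$.
\item The orthogonality pattern between the two local check spaces is controlled by the exponents: row $u$ of $G$ pairs with row $v$ of $H$ to give $\sum_l\beta_l\alpha_l^{u+v}$, which vanishes when $u+v\le s-2$. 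Since $u+v\le 2\delta-4\le s-2$ whenever $\delta\le r+1$, and the hypothesis gives $\delta\le r$, the two local check spaces are mutually orthogonal.
\end{itemize}
So locally the pair is dual-containing and needs no ebits. The local code $\mathrm{EACSS}(C_X|_{J_i},C_Z|_{J_i})$ is then an honest local CSS code built from two MDS codes of distance $\delta$. Its distance is at least $\delta$ by \Cref{rem:EACSS_params}, because the weight sets involved are nonzero words of MDS $[s,r]$ codes, of weight at least $s-r+1=\delta$.

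\textbf{Main obstacle.} The delicate point is that the random rows of $H_X,H_Z$ are dense and break global orthogonality, so the relevant local codes must be the block ones rather than the true punctures $C_X|_{J_i}$. I would handle this as the recovery procedure of \Cref{cor:EA-Qc} does. Only the block stabilizers supported on $J_i$ are measured, so recovery on $J_i$ uses only the local checks, and the entanglement shared via $E_B$ resolves any global anticommutation outside these local generators.

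With that, \Cref{cor:EA-Qc} (with $t=1$, $R_i=J_i$) yields that $Q$ is an $(r,\delta)$ EAQLRC with locality $r$ and the stated recovery sets $J_i$. Showing that the random rows do not interfere with this local recovery is the step I expect to need the most care.
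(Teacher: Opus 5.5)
Your route is the paper's. Full row rank gives check spaces of dimension $m(\delta-1)+\ell$, and locality comes from the block rows $G_{a+1}$, $(H_{a+1})_{\delta-1}$ via \Cref{cor:EA-Qc}. Your GRS computations are correct and go beyond the paper, which only cites \cite[Lemma 3.3]{guruswami2026quantumhierarchicallocallyrecoverable}: the identity $\sum_l\beta_l\alpha_l^{e}=0$ for $e\le r+\delta-3$, the MDS property, and orthogonality of the two local check spaces when $\delta\le r+1$.

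The gap is exactly the step you defer, and the mechanism you sketch for it fails as stated. The random rows of $H_X$ and $H_Z$ are independent, so a block check $g\in A:=\mathrm{row}(G_{a+1})$ typically has $g\cdot z\neq 0$ for some random row $z$ of $H_Z$. The augmentation condition $g\cdot z+\phi_X(g)\cdot\phi_Z(z)=0$ then forces the lift $(g,\phi_X(g))\in\widetilde C_X$ to have $\phi_X(g)\neq 0$. Hence the bare operator $X(g)$ on $J_i$ does not commute with the stabilizer $Z(z,\phi_Z(z))$. It is not a stabilizer of the encoded state, and measuring it disturbs that state. The orthogonality of the local pair on $J_i$ (your ``needs no ebits'') is therefore irrelevant globally. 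Nor can \Cref{cor:EA-Qc} be quoted verbatim, since its hypothesis concerns the true punctures, not the block codes. The paper's proof passes over the same point: the cited lemma is for CSS codes, where dual containment is enforced.

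The step closes once you work with the lifted block checks, which \Cref{def:EAQLRC} allows because $\mathrm{Rec}_i$ may read $E_B$.
\begin{enumerate}
\item Take as local stabilizers $X(g,\phi_X(g))$ for $g\in A$ and $Z(h,\phi_Z(h))$ for $h\in B:=\mathrm{row}((H_{a+1})_{\delta-1})$. They are supported on $J_i\cup E_B$ and commute by the defining property of the augmentation.
\item Let $I\subseteq J_i$ be an erased set with $|I|\le\delta-1$. A Pauli $X(e)Z(f)$ supported on $I$ commutes with all of these if and only if $e|_{J_i}\in B^\perp$ and $f|_{J_i}\in A^\perp$.
\item Both duals are $[r+\delta-1,r,\delta]$ MDS codes, so weight at most $\delta-1$ forces $e=f=0$.
\item Apply the standard stabilizer criterion for local erasure recovery: every Pauli on $I$ that commutes with all stabilizers supported in the accessible region $J_i\cup E_B$ must itself be a stabilizer. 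So measuring the lifted checks and correcting on $I$ restores the state.
\end{enumerate}
This uses only $d(A^\perp)=d(B^\perp)=\delta$ and access to $E_B$. Local orthogonality is automatic for the lifts. Without $E_B$, there need not be any nonzero stabilizer supported on $J_i$ at all.

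A minor point: the construction only requires the $\alpha_{i,j}$ to be distinct, so a node may equal $0$, and then a row with $u\ge 1$ vanishes at $i$. Either assume nonzero nodes or change basis so that each chosen check is nonzero at $i$.
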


\begin{proof} For each coordinate $i \in [n]$, let $a$ be as defined in the theorem statement. The proof follows directly from \cite[Lemma 3.3]{guruswami2026quantumhierarchicallocallyrecoverable} as the construction of $G_{a+1}$ and $(H_{a+1})_{\delta -1}$ have $\delta-1$ rows and the columns that correspond to coordinates in $J_i$ satisfying the $(r,\delta)$ locality. The dimension computation is straightforward. Our construction does not need last $l$ rows added in \cite[Construction 1]{guruswami2026quantumhierarchicallocallyrecoverable} via random sampling to ensure the dual containment requirement of CSS codes. The additional factor of $c$ comes from the EAQECC construction.

\end{proof}

\begin{remark}
For fixed $r$ and $\delta$, the first $m$ rows of the matrices $H_X$ and $H_Z$ in \Cref{cons:construction2}
are sparse. We add randomly sampled $\ell$ rows which may not necessarily be sparse. These rows are added to boost the code distance. However, the sparsity of matrices $H_X$ and $H_Z$ (up to a threshold) can be controlled by controlling the weight of the additional $\ell$ rows. Sparsity is a feature that is useful in interpreting this family as an
asymptotic entanglement-assisted quantum LDPC-type family. Hence, \Cref{prop:random_eaqlrc} is significant not only because it proves that \Cref{cons:construction2}
indeed produces EAQLRCs, but also because it is a step towards building an asymptotic family of entanglement-assisted quantum LDPC-type codes. 
Indeed, for fixed \(r\) and \(\delta\), the local rows of \(H_X\) and \(H_Z\) in \Cref{cons:construction2}  have row weight at most \(r+\delta-1\), and each coordinate appears in at most \(\delta-1\) local checks. The additional \(\ell\) rows may be dense unless one explicitly imposes a sparsity constraint when sampling them. When the additional \(\ell\) rows are sampled subject to a bounded-weight constraint, the resulting family has bounded stabilizer weights and bounded qudit degree.

Consequently, as
\[
n = m(r+\delta-1)\to\infty,
\]
the resulting family has bounded stabilizer weights and bounded qudit degree.
In particular, \Cref{prop:random_eaqlrc} shows that one can obtain an infinite family of
EAQLRCs with explicit local recovery sets and sparse stabilizer structure,
which is attractive both from the coding-theoretic perspective and for possible
quantum implementations.

\end{remark}

We conclude this section with a toy example demonstrating \Cref{prop:random_eaqlrc}. 

\begin{example}
Set $n=12$, $r=2$, $\delta=2$, 
and let $q\ge 12$. Then
\[
r+\delta-1=3,\qquad m=\frac{n}{r+\delta-1}=\frac{12}{3}=4.
\]
By \Cref{prop:random_eaqlrc}, \Cref{cons:construction2} yields an EAQLRC of dimension
\[
k = n-2m(\delta-1)+2\ell+c = 12-2\cdot 4\cdot 1 + 2\ell + c = 4+2\ell+c.
\]
Moreover, the coordinates are partitioned into the four recovery blocks
\[
\{1,2,3\},\qquad \{4,5,6\},\qquad \{7,8,9\},\qquad \{10,11,12\}.
\]
Thus each coordinate is recovered from the block containing it. For instance,
coordinates $1,2,3$ all have recovery set $\{1,2,3\}$, coordinates $4,5,6$
all have recovery set $\{4,5,6\}$, and so on. This illustrates that the local
recovery structure is completely explicit: the code consists of
small local groups of size $r+\delta-1$, each of which supports local recovery.
\end{example}

\section{Constructions of EAQLRCs from curves}
\label{sec:constructions}

In this section, we construct EAQLRCs from several families of algebraic-geometry codes over finite fields. We begin with Tamo–Barg codes on the projective line, which provide a concrete source of classical $(r,\delta)$ LRCs, and then apply the entanglement-assisted framework developed in \Cref{sec:EALRC} to obtain quantum codes with the same local recovery structure. We next present a more general fiber product construction, in which locality and availability arise from compatible morphisms between curves, and show how this framework yields both unfolded and folded EAQLRCs. Finally, we specialize this construction to Hermitian and Suzuki families, thereby obtaining additional explicit examples of EAQLRCs with availability. For each EAQLRC, the locality will be given, and the other code parameters follow from \Cref{rem:EACSS_params}.

\subsection{Tamo–Barg constructions on the projective line}

We first consider Tamo–Barg codes on the projective line, which provide a concrete source of classical $(r,\delta)$ LRCs for the entanglement-assisted constructions developed below.

\subsubsection{Classical Tamo–Barg $(r,\delta)$ LRCs}

We begin by recalling the classical Tamo–Barg construction and the corresponding dimension and distance formulas, which will serve as the input for the quantum constructions in the remainder of this subsection.

\begin{construction}
\label{cons:construction5.1}
    Let $q$ be a prime power. Choose positive integers $r,\delta \ge 2$ such that $(r+\delta-1) \mid (q-1)$. Let $\ell \in [q]_0$. Define the sets
\[
S = \left\{ i \in [\ell] \;:\; i \not\equiv r, r+1, \dots, r+\delta-2 \pmod{r+\delta-1} \right\},
\]

\[
T = \mathbb{F}_q^*.
\]
The $(r,\delta)$ Tamo--Barg code is defined as \({C}_{\mathrm{TB}} = ev_T\big( \mathbb{F}_q[X]_S \big)
\) is an $(r,\delta)$ LRC of length $n$, dimension $k = |S|= r \left\lfloor \frac{\ell}{r+\delta-1} \right\rfloor
+
\min\!\left( \ell \bmod (r+\delta-1),\, r \right)$, and
minimum distance
\[
d = (q-1) - k + 1 - \left(\left\lceil \frac{k}{r} \right\rceil - 1\right)(\delta - 1).
\]     
\end{construction}

 Note that the evaluation map is injective on $\mathbb{F}_q[X]_S$,  and we can 
 determine $|S|$ by partitioning the set of exponents into blocks of length
$r+\delta-1$. Each complete block contains exactly $r$ admissible exponents, while the remaining partial block contributes
\[
\min \left\{ \ell \bmod (r+\delta-1),\, r \right\}
\]
additional exponents.

\subsubsection{EAQLRCs from Tamo–Barg codes}

Having introduced the classical Tamo–Barg family, we now apply the entanglement-assisted construction from \Cref{sec:EALRC} to obtain EAQLRCs with the same locality parameters.

\begin{theorem}[$(r, \delta)$ EAQLRCs from Tamo--Barg classical LRCs]
\label{thm:EA_QTB}
Let $C \subseteq \mathbb{F}_q^n$ be a classical Tamo--Barg $(r, \delta)$ LRC, with $n = q-1 = m(r+\delta - 1)$, $k=\dim(C)$, $d(C) = n - k + 1 - \Big(\Big\lceil \frac{k}{r} \Big\rceil - 1\Big)(\delta - 1)$ and locality $r$. Then the resulting $\textup{EACSS}(C,C)$ is an $(r,\delta)$ EAQLRC with parameters as in \Cref{rem:EACSS_params}.
\end{theorem}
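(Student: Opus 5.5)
The plan is to deduce the statement from \Cref{cor:EA-Qc} with $C_1=C_2=C$, so the work reduces to exhibiting the local parity-check structure of the Tamo--Barg code and checking the local distance hypothesis.

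\textbf{Step 1: recovery sets.} Write $n=q-1=m(r+\delta-1)$. Let $H=\{x\in\F_q^* : x^{m}=1\}$ be the subgroup of order $r+\delta-1$ (equivalently, the fibers of $x\mapsto x^{r+\delta-1}$). Its cosets $aH$ partition $T=\F_q^*$ into $m$ blocks of size $r+\delta-1$. For a coordinate $i$, indexed by the point $\alpha_i$, take $R_i$ to be the coset containing $\alpha_i$.

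\textbf{Step 2: local structure.} On a coset $aH$, every monomial $X^j$ with $j\in S$ agrees with $a^{j-j'}X^{j'}$, where $j'=j \bmod (r+\delta-1)$. This uses $x^{r+\delta-1}=a^{r+\delta-1}$ on $aH$ (in general the invariant $x^{r+\delta-1}$ is constant on each fiber). By the definition of $S$, the residue satisfies $j'\in[r]_0$. Hence $C|_{R_i}$ is contained in the Reed--Solomon code $\mathrm{RS}_r$ on the $r+\delta-1$ points of the coset. That code is MDS of length $r+\delta-1$, so $\dim(C|_{R_i})\le r$ and $d(C|_{R_i})\ge\delta$. The dual of $\mathrm{RS}_r$ on these points is a generalized Reed--Solomon code of dimension $\delta-1$, spanned by the rows $\beta_l\alpha_l^{s}$, $s\in[\delta-1]_0$, in the notation of \Cref{cons:construction2}. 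Its vectors, extended by zero outside $R_i$, are parity checks of $C$ supported in $R_i$. These $\delta-1$ checks are independent, and since the dual code is MDS they can be chosen with coordinate $i$ in their support. Taking $c'_{1j}=c'_{2j}$ to be these vectors gives the $\delta-1$ pairs required by \Cref{cor:EA-Qc}, with the check-space convention of \Cref{def:EACSS_check_space}.

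\textbf{Step 3: local distance.} This is the main obstacle: showing that $\textup{EACSS}(C|_{R_i},C|_{R_i})$ has distance at least $\delta$. By \Cref{rem:EACSS_params}, this distance is at least the minimum weight of $(C|_{R_i})^\perp\setminus\mathrm{Hull}(C|_{R_i})$ in the check-space convention, or of the corresponding local code. I would first try the direct route: the relevant local classical codes are subcodes of an MDS code, or the MDS dual pair, on $r+\delta-1$ points, so every nonzero vector has weight at least $\delta$.

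If the check/code conventions make this bound come out as the weight of the $(\delta-1)$-dimensional dual instead, I would fall back on the recovery argument. The $\delta-1$ local checks, together with the entanglement resolving any non-commutation, let one reconstruct up to $\delta-1$ erased qudits in $R_i$. The reason is that the erased positions determine a square $(\delta-1)\times(\delta-1)$ Vandermonde-type submatrix, which is invertible. This gives the erasure-correction property that the distance hypothesis is meant to guarantee.

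\textbf{Conclusion.} With Steps 1--3 in place, \Cref{cor:EA-Qc} yields that $\textup{EACSS}(C,C)$ is an $(r,\delta)$ EAQLRC. Its parameters $[[n,n-2k+c,d;c]]$ are read off from \Cref{rem:EACSS_params}, with $c=\dim C-\dim\mathrm{Hull}(C)$.
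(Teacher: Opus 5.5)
Your reduction to \Cref{cor:EA-Qc} with $C_1=C_2=C$ is the paper's route. However, the paper never exhibits local check pairs or computes a local distance: it notes that $C$ is a classical $(r,\delta)$ LRC and invokes the final sentence of \Cref{cor:EA-Qc}.

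\textbf{The gap is at the end of Step~2.} Your attempt to verify the detailed hypotheses fails there. The generalized Reed--Solomon vectors you build are parity checks of $C$, i.e.\ elements of $C^\perp$. But \Cref{cor:EA-Qc} requires pairs in $C_1\times C_2=C\times C$. Under the check-space convention of \Cref{def:EACSS_check_space} (which you invoke, and under which the stated parameters $[[n,n-2k+c,d;c]]$ are computed), the stabilizers of $\textup{EACSS}(C,C)$ are codewords of $C$ itself.

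Whenever $d(C)>r+\delta-1$ (e.g.\ $d(C)=8>3=|R_i|$ in \Cref{ex:EAQLRC_TB}), no nonzero codeword of $C$ is supported in $R_i$, so the required pairs do not exist. The ebits do not rescue this: an augmented check supported in $R_i\cup E_B$ restricts to a codeword of $C$ supported in $R_i$, so it acts trivially on qudit $i$. Your local vectors do become stabilizers if you take $C^\perp$ as the check space, but that changes the dimension to $2k-n+c'$ with $c'=n-k-\dim\Hull(C)$, not $n-2k+c$. So the sentence ``taking $c'_{1j}=c'_{2j}$ to be these vectors gives the $\delta-1$ pairs required'' is false as written.

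Step~3 inherits the problem. It is a conditional plan rather than an argument. Its fallback, the invertibility of a $(\delta-1)\times(\delta-1)$ Vandermonde minor, is a classical erasure argument that presupposes those local vectors are stabilizers.

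\textbf{How to match the paper.} Drop the construction of pairs and the local $\textup{EACSS}$ distance computation, and cite the final clause of \Cref{cor:EA-Qc} directly. Your Steps~1--2 supply the only input it needs, namely that $C$ is a classical $(r,\delta)$ LRC whose recovery sets are the cosets, after one fix. As written, $\{x:x^m=1\}$ has order $m$. You need $H=\{x\in\F_q^*:x^{r+\delta-1}=1\}$, which is what your identity $x^{r+\delta-1}=a^{r+\delta-1}$ on $aH$ actually uses.

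Be aware, though, that this clause is exactly where the paper absorbs, without argument, the convention tension your explicit route exposes. The paper's one-line proof does not address it either.
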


\begin{proof}
By Tamo--Barg construction, $C$ is a classical LRC of dimension $k$, distance $d(C)$, and locality $r$. Applying \Cref{cor:EA-Qc}, an EAQECC with the above parameters can be constructed. The EAQECC is locally recoverable with locality $r$ by \Cref{cor:EA-Qc}.
\end{proof}

Next, we illustrate \Cref{thm:EA_QTB} with a small example. 

\begin{example} \label{ex:EAQLRC_TB}
Let
\(q=13\), \(r=2\), and \( \delta=2\).
Then $r+\delta-1=3 \mid (q-1)=12$,
so the Tamo--Barg construction applies. Take the code $C$ obtained by
evaluating on $\mathbb{F}_{13}^\ast$, so that
\(
n=q-1=12.
\)
Choose $\ell=5$, and let
\[
S=\{\, i\in [5]_0 : i \not\equiv 2 \pmod 3 \,\}
   = \{0,1,3,4\}.
\]
Hence
\(
k = |S| = 4
\).
By \Cref{cons:construction5.1}, the classical Tamo--Barg code $C$ is an
$(r,\delta)=(2,2)$ LRC of length $12$ and dimension $4$.
Its minimum distance is
\[
d(C)
=
n-k+1-\left(\left\lceil \frac{k}{r}\right\rceil-1\right)(\delta-1)
=
12-4+1-(2-1)(1)
=
8.
\]
Therefore, by \Cref{thm:EA_QTB}, the entanglement-assisted code
\(
Q = \textup{EACSS}(C,C)
\)
is a $(2,2)$ EAQLRC.
Since $Q$ is constructed from the pair of codes $(C,C)$, the entanglement parameter is
\(
c=\dim(C)-\dim(\mathrm{Hull}(C)).
\)
A direct computation gives
\(
\dim(\mathrm{Hull}(C))=3,
\)
so
\(
c=4-3=1.
\)

One can also compute the distance bound 
explicitly as follows using software such as \cite{BallEtAl2021CodingTheoryMacaulay2, BosmaCannonPlayoust1997Magma, Macaulay2}. Since
\[
\dim(C^\perp)=12-4=8>\dim(C)=4,
\]
we have $C^\perp\not\subseteq C$, so the relevant case is
\(
d_{EA}\ge \mathrm{wt}\bigl(C^\perp\setminus \mathrm{Hull}(C)\bigr)
\).
A direct computation shows that $\mathrm{wt}(C^\perp)=3$; in fact, there is
a weight-$3$ codeword of $C^\perp$ supported on the coordinates
\[
\{1,3,9\}
\]
(for example, with nonzero entries $3,9,1$ on those positions) which does not
belong to $\mathrm{Hull}(C)$. Hence
\(
\mathrm{wt}\bigl(C^\perp\setminus \mathrm{Hull}(C)\bigr)=3\),
and therefore
\(
d_{EA}\ge 3.
\)

To make the locality explicit, index the coordinates of $C$ by the evaluation
points in $\mathbb{F}_{13}^\ast=\{1,2,\dots,12\}$. Since
\(
r+\delta-1=3,
\)
the relevant recovery sets are the cosets of the subgroup
\(
H=\{1,3,9\}\subseteq \mathbb{F}_{13}^\ast.
\)
These cosets are
\(
\{1,3,9\} \), \(\{2,5,6\}\), \(\{4,10,12\}\), \(\{7,8,11\}\).
Hence, recovery sets for each coordinate are:
\[
R_1=R_3=R_9=\{1,3,9\},
\]
\[
R_2=R_5=R_6=\{2,5,6\},
\]
\[
R_4=R_{10}=R_{12}=\{4,10,12\},
\]
and
\[
R_7=R_8=R_{11}=\{7,8,11\}.
\]
Thus, every coordinate belongs to a local group of size $3$. Since
\(
(r,\delta)=(2,2),
\)
any single erased qudit can be recovered by accessing the other two qudits in
its recovery set. In particular, the EAQLRC $Q=\textup{EACSS}(C,C)$ has locality
$r=2$ and local erasure tolerance $\delta-1=1$, inherited from the underlying
classical Tamo--Barg code via \Cref{thm:EA_QTB}. Thus this example gives an explicit $(2,2)$ EAQLRC with parameters
\(
[[12,\;5,\;\ge 3;\;1]].
\)
\end{example}

\subsubsection{Folded Tamo–Barg EAQLRCs}

We next consider folds of these codes, which preserve the local recovery structure while producing additional EAQLRC families over larger alphabets.

\begin{theorem}[$(r,\delta)$ entanglement-assisted folded quantum Tamo--Barg codes]
\label{thm:EA_FQTB}
Consider  $C = \mathrm{ev}(\mathbb{F}_q[X]_S) \subseteq \mathbb{F}_q^{q-1}$, the classical $(r,\delta)$ Tamo--Barg code evaluated over $\mathbb{F}_q^*$. 
Let $s \mid \frac{q-1}{r+\delta - 1}$ and let $C^{(s)}$ denote the $s$-fold of $C$, so that $C^{(s)} \subseteq (\mathbb{F}_q^s)^{(q-1)/s}.$ 
Then there exists an EACSS code \( \widetilde Q=\operatorname{EACSS}\left(C^{(s)},C^{(s)}\right) \) whose parameters follow from \Cref{rem:EACSS_params}. Moreover, \(\widetilde Q\) is an EAQLRC with \((r,\delta)\) locality.
\end{theorem}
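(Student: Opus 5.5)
The plan is to reduce the statement to \Cref{thm:EA_QTB} and \Cref{cor:EA-Qc} by checking that folding with $s \mid \frac{q-1}{r+\delta-1}$ is compatible with the local recovery structure of the Tamo--Barg code.

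First, I will fix the ordering of coordinates so that folding respects the recovery groups. The recovery sets of $C$ are the cosets of the subgroup $H\le\mathbb{F}_q^*$ of order $r+\delta-1$. There are $m=(q-1)/(r+\delta-1)$ such cosets. Order $\mathbb{F}_q^*$ coset by coset, so the coordinates of $C$ form $m$ consecutive blocks of length $r+\delta-1$. The condition $s\mid m$ then lets me group $s$ consecutive cosets into each block $B_i$ of the $s$-fold, so no recovery group is split across folded symbols. (If instead one wants $s$ to divide the coset length, the symmetric argument works.) Each folded recovery set is therefore a union of whole folded symbols. Its image is a punctured folded code whose dimension over $\mathbb{F}_q$ is at most $r$ per original coset, and whose minimum distance is still at least $\delta$ in the sense relevant to the local codes. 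I will verify this directly from $d(C|_{R})\ge\delta$ and $\dim C|_R\le r$, using that puncturing commutes with folding when $R$ is a union of blocks.

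Second, I will construct $\widetilde Q$ itself. Regard $C^{(s)}$ as an $\mathbb{F}_q$-linear code of length $q-1$ with the block structure, and take the dual to be $(C^\perp)^{(s)}$, as defined in the preliminaries. The EACSS construction of \Cref{def:EACSS_check_space} applies verbatim to the $\mathbb{F}_q$-linear pair $(C,C)$, with each folded symbol treated as a group of $s$ qudits, i.e.\ a $q^s$-dimensional qudit. The entanglement parameter is $c=\dim C-\dim\Hull(C)$, unchanged by folding, and \Cref{rem:EACSS_params} gives $k$ and the distance bound, with the distance measured in folded symbols, so $d\ge\lceil d_{\mathrm{unfolded}}/s\rceil$.

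Third, I will invoke \Cref{cor:EA-Qc}. For each folded coordinate, take the union $R$ of the blocks meeting its coset. The local parity-check pairs coming from the Tamo--Barg local codes, namely the $\delta-1$ dual checks per coset, are supported in $R$ and meet the coordinate. The local code $\operatorname{EACSS}(C|_R,C|_R)$ inherits distance at least $\delta$ from the local MDS structure, since each restriction to a coset is an RS-type code of length $r+\delta-1$ and dimension $r$. \Cref{cor:EA-Qc} then gives $(r,\delta)$ locality.

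The main obstacle is the interaction of locality with folding. A folded recovery set measured in large symbols has a different size: if $s$ groups several cosets, the locality in folded symbols is really about $r/s$-ish, not $r$. I will handle this by stating locality with respect to the underlying $\mathbb{F}_q$-coordinates within each folded symbol, and by checking carefully that erasing a whole folded symbol does not exceed the $\delta-1$ tolerance of a single coset. If it does, I will restrict to the coset-aligned case $s\mid(r+\delta-1)$ as the fallback.
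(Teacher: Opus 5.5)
\textbf{There is a genuine gap in the folding step.} Your overall route is the same as the paper's: Tamo--Barg locality, then inheritance of locality under folding, then \Cref{cor:EA-Qc}. The paper simply imports the folding step from Lemmas 57--58 of \cite{golowich2025quantum}. The gap is in the one step you set out to verify yourself, namely how the fold is aligned with the cosets of $H$.

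A block $B_i$ of the $s$-fold has only $s$ coordinates, so it cannot contain $s$ whole cosets. If folded symbols were made of whole cosets, erasing one would wipe out those cosets and no local recovery would remain. Your actual ordering is coset by coset, with $s$ consecutive coordinates per folded symbol. Then a folded symbol typically contains several points of the \emph{same} coset, so a single folded erasure already exceeds the local tolerance $\delta-1$ whenever $s>\delta-1$. In the paper's own example ($q=43$, $r=6$, $\delta=2$, $s=2$, cosets of size $7$), the first folded symbol consists of two points of $H$, and the $[7,6,2]$ local code cannot recover it.

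Your ``symmetric'' fallback $s\mid(r+\delta-1)$ fails for the same reason. There the folded local code has length $(r+\delta-1)/s$ and tolerates only $\lfloor(\delta-1)/s\rfloor$ folded erasures, which is not $(r,\delta)$ locality. It also replaces the theorem's hypothesis $s\mid\frac{q-1}{r+\delta-1}$ with a different one. Your worry that folded locality is ``$r/s$-ish'' is a symptom of this misalignment, not a real feature of the statement.

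\textbf{The missing idea is to interleave the cosets rather than concatenate them.} Put $m=\frac{q-1}{r+\delta-1}$ and let $\gamma$ be a primitive element. Then $H=\langle\gamma^m\rangle$ and $\gamma^j\in\gamma^{j\bmod m}H$. Order $\mathbb{F}_q^*$ as $\gamma^0,\gamma^1,\dots,\gamma^{q-2}$, so the $i$-th folded symbol is $(\gamma^{(i-1)s},\dots,\gamma^{is-1})$.
\begin{itemize}
\item Since $s\mid m$, the residue $b=(i-1)s\bmod m$ is a multiple of $s$. So this symbol contains exactly one point of each of the $s$ cosets $\gamma^bH,\dots,\gamma^{b+s-1}H$.
\item Exactly $r+\delta-1$ folded symbols share the same $b$, and they partition the union of these $s$ cosets. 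These $r+\delta-1$ folded symbols form the folded recovery set.
\item Erasing at most $\delta-1$ of them erases at most $\delta-1$ points of each of the $s$ cosets. Each coset is then recovered by its local code, which has length $r+\delta-1$ and dimension at most $r$.
\item Equivalently, the punctured folded code has folded dimension at most $r$ and folded distance at least $\delta$. A nonzero codeword is nonzero on some coset, has at least $\delta$ nonzero entries there, and these entries lie in distinct folded symbols.
\end{itemize}
This is exactly what the paper's example does with the ordering $h_1,\alpha h_1,\dots,h_7,\alpha h_7,\dots$. With this ordering your second and third steps go through as in the paper. The locality of $\widetilde Q$ is then $r$, with recovery sets of $r+\delta-1$ folded symbols.
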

\begin{proof} It is well known that the classical Tamo--Barg code $C$  is a classical LRC with locality \(r\). Furthermore, as shown in \cite[Lemma 57 and Lemma 58]{golowich2025quantum}, the folded code ${C^{(s)}} \subseteq (\mathbb{F}_{q}^s)^{(q-1)/s}$ is an LRC and inherits the locality $r$ from $C$. Now consider the entanglement-assisted quantum code $\widetilde{{Q}} =\textup{EACSS}({C^{(s)}}, {C^{(s)}}).$ The parameters of $\tilde{Q}$ follow directly from the entanglement-assisted construction. Finally, by \Cref{cor:EA-Qc} the resulting entanglement-assisted quantum code ${\tilde{Q}}$ inherits the locality $r$ of ${C^{(s)}}$.
\end{proof}

\begin{remark} By using the entanglement-assisted construction, one can avoid having $C^{\perp} \subseteq C,$ eliminating the need to verify the dual-containment condition required in the CSS setting. 
In addition, first folding and then applying the entanglement-assisted construction yields the same result as first forming the entanglement-assisted code and then folding.
    
\end{remark}

We close this subsection with an example demonstrating \Cref{thm:EA_FQTB}.

\begin{example}
Let
$q=43$, $r=6$, and $\delta=2$.
Then $
\frac{q-1}{r+\delta-1}=\frac{42}{7}=6$.
Hence, we may choose the folding parameter
\(
s=2
\),
since $2\mid 6$. Let $C$ be the  $(6,2)$ Tamo--Barg code over $\mathbb{F}_{43}$
obtained by evaluating on $\mathbb{F}_{43}^\ast$, so that
\(
n=q-1=42
\). We order the evaluation points as
\[
h_1,\alpha h_1,h_2,\alpha h_2,\ldots,h_7,\alpha h_7,
\]
\[
\alpha^2h_1,\alpha^3h_1,
\alpha^2h_2,\alpha^3h_2,\ldots,
\alpha^2h_7,\alpha^3h_7,
\]
\[
\alpha^4h_1,\alpha^5h_1,
\alpha^4h_2,\alpha^5h_2,\ldots,
\alpha^4h_7,\alpha^5h_7,
\]
where
\[
H=\{h_1,
\ldots,h_7\}
=
\{1,41,4,35,16,11,21\}.
\]

Choose $\ell=13$, and let
\[
S=\{\,i\in[13]_0 : i\not\equiv 6 \pmod 7\,\}
  =\{0,1,2,3,4,5,7,8,9,10,11,12\}.
\]
Define
\[
C=\ev_{\F_{43}^*}\left(\F_{43}[X]_S\right)
\subseteq \F_{43}^{42}.
\]
By \Cref{cons:construction5.1}, the code $C$ is a $(6,2)$ LRC of length $42$, dimension $k=\dim(C)=|S|=12$, and minimum distance 
\[
d(C)
=
n-k+1-\left(\left\lceil \frac{k}{r}\right\rceil-1\right)(\delta-1)
=
42-12+1-(2-1)(1)
=
30.\]
With our folded-dimension convention, the \(2\)-fold code
\[
C^{(2)}=
\left\{
\bigl((c_1,c_2),(c_3,c_4),\ldots,(c_{41},c_{42})\bigr):
(c_1,\ldots,c_{42})\in C
\right\}
\subseteq (\F_{43}^2)^{21}
\]
of $C$ has parameters \( [21,6,\ge 15]\), since its length is \(42/2=21\), its folded dimension is \(12/2=6\), and its folded distance is at least \(\lceil 30/2\rceil=15\).
Applying \Cref{thm:EA_FQTB} to $C^{(2)}$, we obtain the $(6,2)$ EAQLRC
\(
\widetilde Q = \textup{EACSS}(C^{(2)},C^{(2)})
\)
of length $21$. 
A direct computation shows that  
\(
\dim({C^{(2)}}^\perp)=21-6=15>\dim(C^{(2)})=6
\)
where the latter dimension is taken as a code over $\F_{43^2}$; we note that as a linear code over $\F_{43}$
$$\dim_{\mathbb F_{43}}(C^{(2)})= 12 \text{ whereas, assuming that } C^{(2)} \text{ is } \F_{43^2}\text{-linear, } \dim_{\mathbb F_{43^2}}(C^{(2)})=6.$$
Moreover, 
\[
\dim_{\mathbb F_{43}}(\Hull(C^{(2)}))=11
\]
so that
\(
c=\dim(C^{(2)})-\dim(\Hull(C^{(2)}))=12-11=1\).
Therefore, we have
\[
k_{EA} = 21 - 2\cdot 6 + 1 = 10.
\]
To compute the distance bound, notice that 
\(
{C^{(2)}}^\perp\not\subseteq C^{(2)}.
\)
One may confirm that there is a codeword of ${C^{(2)}}^\perp$ supported on the
folded coordinates
\(
2,3,4, 5, 7,11
\)
meaning
\[
\bigl((0,0),(5,32),(34,42),(29,32),(14,36),(0,0),(3,32),(0,0),(0,0),(0,0),(41,1),(0,0),\dots,(0,0)\bigr)
\]
and this codeword does not belong to $\Hull(C^{(2)})$. Hence,
\[
d_{EA}\ge
\mathrm{wt}\bigl({C^{(2)}}^\perp\setminus \Hull(C^{(2)})\bigr)= 6.
\]

We now make the locality explicit. Let $\alpha$ be a primitive element of
$\mathbb{F}_{43}^\ast$, and let
\[
H=\langle \alpha^6\rangle
\]
be the subgroup of $\mathbb{F}_{43}^\ast$ of size $7$. The six cosets
\[
H,\quad \alpha H,\quad \alpha^2 H,\quad \alpha^3 H,\quad \alpha^4 H,\quad \alpha^5 H
\]
are the recovery groups of the underlying classical Tamo--Barg code $C$,
each of size $r+\delta-1=7$. Choose an ordering
\(
H=\{h_1,\dots,h_7\}
\)
and order the evaluation points of $C$ as
\[
h_1,\alpha h_1,h_2,\alpha h_2,\dots,h_7,\alpha h_7,
\]
\[
\alpha^2 h_1,\alpha^3 h_1,\alpha^2 h_2,\alpha^3 h_2,\dots,\alpha^2 h_7,\alpha^3 h_7,
\]
\[
\alpha^4 h_1,\alpha^5 h_1,\alpha^4 h_2,\alpha^5 h_2,\dots,\alpha^4 h_7,\alpha^5 h_7.
\]
Folding consecutive pairs then produces $21$ folded coordinates, which we index
as $1,2,\dots,21$. With this ordering, the folded recovery sets are the three
blocks
\[
R_1=\{1,2,3,4,5,6,7\},
\]
\[
R_2=\{8,9,10,11,12,13,14\},
\]
and
\[
R_3=\{15,16,17,18,19,20,21\}.
\]
Each folded coordinate in $R_1$ consists of one symbol from the coset $H$ and
one symbol from the coset $\alpha H$; similarly, each folded coordinate in
$R_2$ consists of one symbol from $\alpha^2H$ and one from $\alpha^3H$, and
each folded coordinate in $R_3$ consists of one symbol from $\alpha^4H$ and
one from $\alpha^5H$. Since the classical code $C$ has locality $r=6$ on each
coset, a single erased symbol in one of these cosets can be recovered from the
other six symbols in that coset. Therefore, if one folded coordinate is erased,
its first component can be recovered from the other six first components in the
same block, and its second component can be recovered from the other six second
components in the same block. Hence the entire folded coordinate is recovered
from the other six folded coordinates in its block. Thus, every folded coordinate has a recovery set of size
\[
r+\delta-1=7,
\]
and because $\delta=2$, any single erased folded symbol can be recovered by
accessing at most
\[
r=6
\]
other folded symbols. 

Now consider the entanglement-assisted quantum code \( \widetilde Q=\operatorname{EACSS}\left(C^{(s)},C^{(s)}\right)\). The parameters of \(\widetilde Q\) follow directly from the entanglement-assisted construction. Finally, by \Cref{cor:EA-Qc}, the resulting entanglement-assisted quantum code \(\widetilde Q\) inherits the locality \(r\) of \(C^{(s)}\). Hence, $\widetilde Q$ has locality $6$, with
recovery sets
\[
R_1,\qquad R_2,\qquad R_3.
\]
A direct computation in the underlying \(F_{43}\)-linear representation gives \( \dim_{F_{43}}\operatorname{Hull}\bigl(C^{(2)}\bigr)=11\). Since \(\dim_{F_{43}} C^{(2)}=12\), the entanglement parameter is \( c=12-11=1. \) On the other hand, by our folded-dimension convention, \(C^{(2)}\) has folded dimension \(12/2=6\). Therefore, \( k_{\mathrm{EA}}=21-2\cdot 6+1=10\). 
We conclude that $\widetilde Q$ is an $(6,2)$ EAQLRC with parameters
\(
[[21,\;10,\;\ge 6;\;1]]
\)
over $\F_{{43}^2}$, together with explicit recovery sets as described above.
\end{example}

\subsection{A general fiber product framework with availability}

While the Tamo–Barg family provides a concrete first example, a broader geometric source of locality and availability is furnished by fiber products of curves.
In this more general algebraic-geometric framework, locality and availability arise from the geometry of fiber products and the associated projection maps, which we now describe.

\subsubsection{Classical fiber product codes}

We begin by recalling the general fiber product setup and the hypotheses under which the fibers of the projection maps give rise to recovery sets with availability.

\begin{construction}
Let $X$,$Y$,$Y_1,\dots,Y_t$
be smooth projective absolutely irreducible algebraic curves over $\F_q$,
defined together with regular surjective separable maps arranged in the
commutative diagram
\[\begin{tikzcd}
& X \arrow[dl, description, "g_1"'] 
     \arrow[dd, description, "g", pos=0.25] 
     \arrow[dr, description, "g_t"] & \\
Y_1 \arrow[dr, description, "h_1"'] 
    & \cdots 
    & Y_t \arrow[dl, description, "h_t"] \\
& Y &
\end{tikzcd}
\]
where $g: X \to Y$ is a map of degree $d_g$ and for each $j\in [t]$, there exist separable morphisms
\[
g_j : X \to Y_j, \qquad h_j : Y_j \to Y, \qquad j \in [t],
\]
such that the diagram is commutative, that is,
\[
h_1 \circ g_1 = h_2 \circ g_2 = \cdots = h_t \circ g_t,
\] where $\deg(g_j) = r + \delta -1$. This implies that $\deg(g) = d_g =\deg(h_j)(r+\delta -1)$ for all $j \in [t]$.
Moreover, this construction identifies $X$ with the fiber product
\[
X\cong Y_1\times_Y \cdots \times_Y Y_t,
\]
which means that
\[
g^*(\F_q(Y))=\bigcap_{j=1}^t g_j^*(\F_q(Y_j))
\subseteq \F_q(X).
\]
Here, \(g^*\) denotes the pullback of rational functions induced by the morphism \(g\), given by \(g^*(f)=f\circ g\), identifying \(\mathbb{F}_q(Y)\) with a subfield of \(\mathbb{F}_q(X)\).
We assume that the maps $g,g_1,\dots,g_t$ satisfy the following hypotheses, which ensure that the fibers of the maps $g_j$ yield $t$ recovery sets and that interpolation on each fiber is possible:
\begin{enumerate}
\item[(i)] Suppose that for each $j\in [t]$, $\F_q(X)=g_j^*(\F_q(Y_j))(x_j),$
and $\F_q(X)=g^*(\F_q(Y))(x_1,\dots,x_t),$
where $x_j$ is a primitive element of the corresponding separable extension. We may also write $\F_q(X)=\F_q(Y)(x),$
where $x$ is a primitive element, and denote its degree by $h$.

\item[(ii)] Let $B\subseteq A\subseteq X(\F_q)$, where $A$ is a set of the form
\[
A=g^{-1}(S)
\]
for some set $S\subseteq Y(\F_q)$. Assume that the subset $B$ can be partitioned
into pairwise disjoint subsets in $t$ different ways:
\[
B= \bigcup_{y_1 \in g_1(B)} g^{-1}_1(y_1) = \bigcup_{y_2 \in g_2(B)} g^{-1}_2(y_2) = \dotsm = \bigcup_{y_t \in g_t(B)} g^{-1}_t(y_t) \] Therefore for each fixed $j\in [t]$, \[ B=\bigsqcup_{y_j\in g_j(B)} g_j^{-1}(y_j)
\]
so that all fibers of $g_j$ over $g_j(B)\subseteq Y_j(\F_q)$ consist of
$\F_q$-rational points of $X$.

\item[(iii)] For every $j\in [t]$ and every $y_j\in g_j(B)$,
\[
|g_j^{-1}(y_j)|=r+\delta-1.
\]

\item[(iv)] Finally, assume that the fibers 
$(y_1,\dots,y_t)\in g_1(B)\times\cdots\times g_t(B)$ satisfy the following:
\[
\left|\bigcap_{j=1}^t g_j^{-1}(y_j)\right|\le \delta - 1.
\]
\item[(v)] For each $j\in [t]$, let $x_j\in \F_q(X)$ satisfy the following conditions:
\begin{enumerate}
    \item for every $j\in [t]$ and every fiber $R_j=g_j^{-1}(y_j)$, the restriction of $x_i$ to $R_j$ is constant for all $i\neq j$;
    \item for every $j\in [t]$ and every fiber $R_j=g_j^{-1}(y_j)$, the values of $x_j$ are pairwise distinct on $R_j$;
    \item there exists a rational point $P_\infty\in X(\F_q)$ and an integer $h\ge 0$ such that
    \[
    (x_j)_\infty\le hP_\infty,\qquad j\in [t].
    \]
\end{enumerate}
\end{enumerate}
 
Let $D$ be a positive divisor on $Y$ of degree $\ell$ such that $D = \ell Q_{\infty},Q_{\infty} \in Y(\F_q)$, $Q_{\infty} \notin g(B)$ and let $\{f_1,\dots,f_m\}$ be a basis of the linear space
\[
\mathcal L(D)\subseteq \F_q(Y),
\] that is, the functions $f_i, i =1,\ldots, m$ are contained in $\mathbb{F}_q(Y)$ and thus $f_i \circ g$ is constant on fibers of $g$. By the Riemann-Roch theorem, $m \geq \ell - g_Y + 1$ where $g_Y$ is the genus of $Y$. Consider the following polynomial space
\[
V:=\operatorname{Span}_{\F_q}
\left\{
x_1^{i_1}\cdots x_t^{i_t}f_k
:\;
i_j\in [r]_0,\ k \in [m]
\right\}
\subseteq \F_q(X).
\]
By primitive element assumptions and separability of extensions, the set of functions $\{x_1^{i_1}\cdots x_t^{i_t}f_k\}$ are linearly independent and form a basis of $V$, thus $\dim(V) = r^tm$. 
The fiber product Tamo--Barg-like code of length $n=|B|$ is constructed as the
image of the evaluation map
\[
\ev_B:V\longrightarrow \F_q^{|B|},
\qquad
F\longmapsto (F(P))_{P\in B}.
\]

The role of the fiber product hypotheses is to ensure that each recovery set is controlled by one projection while remaining largely independent of the others. Condition~(a) guarantees that, along a fiber associated with one recovery map, the functions coming from the other maps restrict to constants on the fiber, while condition~(b) ensures that enough variation remains within the chosen fiber to support local interpolation and recovery. In this way, the geometry of the fiber product translates into the combinatorial structure required for locality and availability.
   \end{construction}

Under these hypotheses, the corresponding evaluation code on the fiber product yields a classical LRC with availability, as stated in the following theorem.

The following proposition follows directly from \cite[Theorem 3.2]{Barg_2015} and the discussion preceding it, and can be regarded as an extension of \cite[Theorem 5.1]{Barg_2015}.

\begin{theorem} [$(r,\delta)$ fiber product code with availability $t$]
\label{codeTamolike}
The subspace $C(D,g_1,\dots,g_t)\subseteq \F_q^{|B|}$ forms an
$(n,k,(r,\delta),t)$ linear LRC with availability $t$ and
parameters
\[
n=|B|,
\]
\[
k=r^t m\ge r^t(\ell-g_Y+1),
\]
\[
d\ge n-\ell \deg(g) -t(r-1)h,
\]
provided that the right-hand side of the inequality for $d$ is a positive integer and that the evaluation map is injective.
\end{theorem}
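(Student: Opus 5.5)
We verify four things in turn: the length, the dimension, the distance bound, and the locality/availability structure. Length is immediate, since $n=|B|$ by construction. For the dimension, we use that $\{x_1^{i_1}\cdots x_t^{i_t}f_k : i_j\in[r]_0,\ k\in[m]\}$ is a basis of $V$, as noted in the construction. Under the standing assumption that $\ev_B$ is injective, this gives $k=\dim V=r^t m$. Riemann--Roch on $Y$ gives $m=\ell(D)\ge \ell-g_Y+1$, so $k\ge r^t(\ell-g_Y+1)$.

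\textbf{Distance.} We bound the pole divisor of a basis element $F=x_1^{i_1}\cdots x_t^{i_t}\,(f_k\circ g)$. Since $f_k\in\mathcal L(\ell Q_\infty)$, its pullback satisfies $(f_k\circ g)_\infty\le \ell\, g^*Q_\infty$, which has degree $\ell\deg(g)$. By (v)(c), each $x_j$ has poles bounded by $hP_\infty$, and $i_j\le r-1$. Here we follow the Tamo--Barg fiber-product count of \cite[Theorem 3.2]{Barg_2015}: the exponent structure in the extension $\F_q(X)/g^*\F_q(Y)$ controls the total contribution of the $x_j$ as at most $t(r-1)h$. Consequently every nonzero $F\in V$ lies in $\mathcal L(\ell g^*Q_\infty+t(r-1)hP_\infty)$.

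Since $Q_\infty\notin g(B)$, and assuming $P_\infty\notin B$, the function $F$ has no poles on $B$. The number of zeros of $F$ in $B$ is therefore at most $\deg(F)_\infty\le \ell\deg(g)+t(r-1)h$. This gives $\operatorname{wt}(\ev_B F)\ge n-\ell\deg(g)-t(r-1)h$. The positivity hypothesis guarantees that this bound is meaningful, and it also independently implies injectivity of $\ev_B$.

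\textbf{Locality and availability.} Fix a point $P\in B$ and $j\in[t]$, and set $R_j=g_j^{-1}(g_j(P))$; by (ii) and (iii) we have $R_j\subseteq B$ and $|R_j|=r+\delta-1$. Since $g$ factors through $g_j$, each $f_k\circ g$ is constant on $R_j$. By (v)(a), each $x_i$ with $i\ne j$ is also constant on $R_j$. Hence the restriction of any $F\in V$ to $R_j$ equals a polynomial in $x_j$ of degree at most $r-1$. By (v)(b), $x_j$ takes $r+\delta-1$ distinct values on $R_j$, so $C|_{R_j}$ is contained in a Reed--Solomon code of length $r+\delta-1$ and dimension at most $r$. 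Therefore $\dim C|_{R_j}\le r$ and $d(C|_{R_j})\ge \delta$, and any $\delta-1$ erasures in $R_j$ are recovered by Lagrange interpolation. Ranging over $j$ yields $t$ recovery sets through $P$. Condition (iv) bounds their common intersection by $\delta-1$, which is the overlap permitted in the availability definition.

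\textbf{Main obstacle.} We expect the hardest step to be the pole-order bookkeeping for the monomials $x_1^{i_1}\cdots x_t^{i_t}$. A naive bound gives only $t(r-1)h$ directly under (v)(c), and this is exactly the term we need. The delicate point is making sure that $P_\infty$ and the support of $g^*Q_\infty$ avoid $B$. We also need to check that the availability overlap in (iv), where the intersection is $\le\delta-1$, matches the "intersect exactly in $J_i$" convention. If necessary, we would enlarge $J_i$ to the actual pairwise intersection.
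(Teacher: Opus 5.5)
Your length, dimension, distance and locality arguments are correct. They are the standard argument behind \cite[Theorem 3.2]{Barg_2015}, which the paper cites instead of giving a proof, so your route is the same as the source's, written out in full. Two small points in this part:
\begin{itemize}
\item The sentence about the ``exponent structure'' is unnecessary. The direct bound $\bigl(x_1^{i_1}\cdots x_t^{i_t}\bigr)_\infty\le \sum_j i_j hP_\infty\le t(r-1)hP_\infty$ is all you need.
\item Your worry about $P_\infty\notin B$ is harmless. Since $\ev_B$ is defined on $V$, each nonzero $F\in V$ is regular on $B$. So its zeros in $B$ number at most $\deg(F)_0=\deg(F)_\infty\le \ell\deg(g)+t(r-1)h$.
\end{itemize}

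The step that does not go through as written is availability. The paper's definition asks that the recovery sets through $P$ satisfy $R_a\cap R_b=J_P$ for every pair $a\neq b$. Hypothesis (iv) only bounds the $t$-fold intersection $\bigcap_{j=1}^t R_j$. For $t=2$ these agree, but for $t\ge 3$ (iv) says nothing about a single pair. Your fallback of enlarging $J_i$ to the actual pairwise intersection does not work either. Different pairs could a priori meet in different sets, or in more than $\delta-1$ points, and then no single $J_P$ of size at most $\delta-1$ exists.

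The missing observation comes from hypotheses you already use for locality. Fix $a\neq b$.
\begin{itemize}
\item By (v)(b), $x_a$ is injective on $R_a$.
\item By (v)(a) applied to the fiber $R_b$, $x_a$ is constant on $R_b$.
\end{itemize}
So $x_a$ is both injective and constant on $R_a\cap R_b$, which forces $|R_a\cap R_b|\le 1$, i.e.\ $R_a\cap R_b=\{P\}$. Take $J_P=\{P\}$, which fits the set-erasure form of the definition with $|J|\le\delta-1$. This gives availability $t$ in the strongest sense: the recovery sets are pairwise disjoint outside $P$, and (iv) is not needed at all.
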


As in the case of Tamo–Barg codes, the fiber product construction also admits a folded version, which we now record for later use in the entanglement-assisted setting. 

\begin{definition}\label{foldedTamo}
Given a folding parameter $s \mid \frac{|B|}{r+\delta-1},$
we define the \emph{folded fiber product code}  $C^{(s)}(D,g_1,\ldots,g_t)$ over alphabet $\F_{q}^s$ by grouping  $s$ consecutive coordinates. Assume that
\[
B=\bigsqcup_{y\in g_j(B)} g_j^{-1}(y)
\]
for some fixed $j\in [t]$, and that each fiber $g_j^{-1}(y)$ has size $r+\delta-1$. Let \[P_1,\ldots, P_{|B|}
\] be an ordering of $B$ that is
compatible with the partition into fibers of $g_j$, in the sense that each block of $r+\delta-1$ consecutive points $P$ corresponds to one fiber. For each $i\in \left[\frac{|B|}{s}\right],$
define
\[
\widetilde B_i:=\{P_{(i-1)s+1},\dots,P_{is}\}.
\]

Now for $F\in V$, define the folded evaluation map
\[
\operatorname{ev}_B^{(s)}(F)
=
\Big(
(F(P))_{P\in \widetilde B_1},
\ldots,
(F(P))_{P\in \widetilde B_{|B|/s}}
\Big)\in (\F_q^s)^{|B|/s}.
\]

The image of $V$ under this map is called the folded fiber product code
\[
C^{(s)}(D,g_1,\ldots,g_t).
\]
\end{definition}

\begin{proposition} The code $C^{(s)}(D,g_1,\ldots,g_t)$ in \Cref{foldedTamo} has parameters
\[
n = \frac{|B|}{s}, k = r^tm/s,
\]
and distance $d \geq \left \lceil(|B|-\ell \deg(g) -t(r-1)h)/s\right\rceil$
obtained by grouping \(s\) consecutive coordinates at a time in an ordering compatible with the fibers of \(g_j\).
\end{proposition}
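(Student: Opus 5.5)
The plan is to derive all three parameters from \Cref{codeTamolike} for the unfolded code $C=C(D,g_1,\ldots,g_t)$, using the conventions for $m$-folds fixed in the preliminaries. The folded code $C^{(s)}(D,g_1,\ldots,g_t)$ is, by \Cref{foldedTamo}, exactly the $s$-fold of $C$ after reordering the points of $B$ compatibly with the fibers of $g_j$. A coordinate permutation does not change the length, dimension or minimum distance of $C$. Hence I may assume the ordering $P_1,\ldots,P_{|B|}$ is the fiber-compatible one and apply the folding facts directly.

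\textbf{Length and dimension.} The length is the number of blocks $\widetilde B_i$, namely $|B|/s$. This is an integer because $s \mid |B|/(r+\delta-1)$ and $(r+\delta-1)\mid |B|$ by hypotheses (ii)--(iii). For the dimension, the map $\operatorname{ev}_B^{(s)}$ is $\ev_B$ followed by the $\F_q$-linear bijection that regroups coordinates. So $\dim_{\F_q} C^{(s)} = \dim_{\F_q} C = r^t m$, using the injectivity of the evaluation map assumed in \Cref{codeTamolike}. By the folded-dimension convention of the preliminaries, the dimension over $\F_q^s$ is then $r^tm/s$.

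\textbf{Distance.} Take two distinct codewords $u\neq v$ of $C^{(s)}$, which come from distinct $c,c'\in C$. They differ in at least $d(C)$ coordinates of $\F_q^{|B|}$. Each folded block contains $s$ coordinates, so at least $\lceil d(C)/s\rceil$ blocks must differ; this is the bound $d'\ge\lceil d/m\rceil$ noted in the preliminaries. Inserting $d(C)\ge |B|-\ell\deg(g)-t(r-1)h$ from \Cref{codeTamolike}, and using that the ceiling is monotone, gives the stated bound.

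I expect no serious obstacle. The only point that needs care is that the fiber-compatible ordering is admissible. With it, $s\mid |B|/(r+\delta-1)$ guarantees that the blocks $\widetilde B_i$ partition $B$. Whether they also respect fibers, either each block lies inside a fiber or each fiber is a union of blocks, matters for locality but not for the parameters claimed here, so I would only remark on it.
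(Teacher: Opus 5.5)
Your proposal is correct and follows the same route as the paper. Length and dimension come from \Cref{codeTamolike} together with the folding convention, and the distance bound comes from the fact that a difference of weight at least $d(C)$ must meet at least $\lceil d(C)/s\rceil$ blocks of size $s$. You also spell out details the paper leaves implicit: that $|B|/s$ is an integer, that injectivity of $\ev_B$ is needed for the dimension count, and that the ceiling is monotone.
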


\begin{proof}
  The length and dimension follow directly from \Cref{codeTamolike} and the definition of folding, while the stated distance estimate is a lower bound inherited from the corresponding unfolded code: grouping coordinates into blocks of size $s$ can only decrease the Hamming weight by a factor of at most $s$, so the minimum distance of the folded code is at least the minimum distance of the original code divided by $s$.
\end{proof}

\subsubsection{EAQLRCs from fiber product codes}

We now apply the entanglement-assisted framework of \Cref{sec:EALRC} to the fiber product codes constructed above, thereby obtaining EAQLRCs with availability.

\begin{proposition} [fiber product EAQLRCs]\label{prop:EAQLRC_fiber}
Let $C$ be the fiber product code in  \Cref{codeTamolike}. Then $\textup{EACSS}(C,C)$ is an $(r,\delta)$ EAQLRC with availability $t$.
\end{proposition}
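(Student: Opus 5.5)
The plan is to reduce the statement to \Cref{cor:EA-Qc}, applied with $C_1=C_2=C$, the fiber product code of \Cref{codeTamolike}. That proposition gives two conclusions: the $(r,\delta)$ EAQLRC property and, when the classical codes have availability $t$, availability $t$ for $Q=\textup{EACSS}(C,C)$. So the work is to verify its hypotheses for $C$.

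\textbf{Step 1: the classical availability structure.} By \Cref{codeTamolike}, $C=C(D,g_1,\dots,g_t)$ is an $(r,\delta)$ LRC with availability $t$. Concretely, for a point $P\in B$ the $t$ recovery sets are the fibers $R_{P,j}=g_j^{-1}(g_j(P))$ for $j\in[t]$. Each has size $r+\delta-1$ by hypothesis (iii), and they pairwise meet in at most $\delta-1$ points by (iv).

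On $R_{P,j}$, every basis function $x_1^{i_1}\cdots x_t^{i_t}f_k$ restricts to a polynomial in $x_j$ of degree at most $r-1$. This uses (v)(a), which makes $x_i$ constant on the fiber for $i\neq j$, together with the fact that $f_k\circ g$ is constant on fibers of $g$, which refine those of $g_j$. By (v)(b), $x_j$ takes $r+\delta-1$ distinct values on the fiber. Hence $C|_{R_{P,j}}$ sits inside a Reed--Solomon code of length $r+\delta-1$ and dimension at most $r$, so its distance is at least $\delta$. This recovers the classical locality and availability witnesses required in \Cref{cor:EA-Qc}.

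\textbf{Step 2: the local parity-check pairs.} For each recovery set $R_{P,j}$ we need $\delta-1$ pairs of parity checks supported inside $R_{P,j}$ and containing $P$. The plan is to take the dual of the local Reed--Solomon code on the fiber. That dual is a generalized RS code of dimension $\delta-1$, with columns scaled by the $\beta$-weights exactly as in \Cref{cons:construction2}. Extending its codewords by zero outside $R_{P,j}$ gives dual vectors of $C$. Since the dual is MDS of length $r+\delta-1$, we can choose $\delta-1$ independent such vectors whose support contains $P$.

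Because $C_1=C_2$, each pair can be taken as the same vector twice. Then the support condition $\supp(c'_{1j})\cup\supp(c'_{2j})\subseteq R_{P,j}$ holds automatically. This is done in the check-space convention of \Cref{def:EACSS_check_space}; I will be careful about whether $C$ or $C^\perp$ plays the role of the check space and match the paper's convention.

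\textbf{Step 3, the main obstacle: local quantum distance.} We must show that $\textup{EACSS}(C|_{R},C|_{R})$ has minimum distance at least $\delta$ for each fiber $R$. The approach is to use the bound in \Cref{rem:EACSS_params} for the local code. The relevant local classical code is an MDS (generalized RS) code of length $r+\delta-1$ whose complementary code has distance at least $\delta$. The relative hull only removes words from the set whose minimum weight is taken, so the minimum weight of what remains is at least the MDS distance, which is at least $\delta$. One must check that the correct code among $C|_R$ and $(C|_R)^\perp$ enters this bound under the paper's conventions. If the orientation goes the wrong way, I would instead argue directly that any $\delta-1$ erasures inside $R$ are correctable classically in both the $X$ and $Z$ sectors, which is what the MDS property gives.

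Finally, condition (iv) ensures overlaps between recovery sets of the same point have size at most $\delta-1$, within the local erasure capability. Invoking the availability clause of \Cref{cor:EA-Qc} then gives availability $t$.
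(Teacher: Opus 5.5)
Your plan is essentially the paper's proof. The paper only invokes \Cref{cor:EA-Qc} with $C_1=C_2=C$, through its closing clause that classical $(r,\delta)$ LRCs with availability $t$ give an $(r,\delta)$ EAQLRC with availability $t$, and takes the classical locality and availability from \Cref{codeTamolike}; your Steps 1--3 unpack hypotheses the paper takes for granted. In that unpacking, it is the fibers of $g_j$ that refine those of $g=h_j\circ g_j$, not the reverse; hypothesis (iv) bounds only the $t$-fold intersection of fibers, so for $t\ge 3$ the pairwise overlaps must come from \Cref{codeTamolike}; and your local checks lie in $C^\perp$, so your convention hedge can only be resolved by taking the checks of \Cref{cor:EA-Qc} from the duals (as in the preliminaries' CSS criterion) and using $d(C|_R)\ge\delta$ in Step 3, because under the literal check-space reading the checks would have to be nonzero codewords of $C$ supported on a single fiber, and none exist once $d(C)>r+\delta-1$.
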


\begin{proof} The parameters of the code follow directly from the entanglement-assisted construction. The $(r,\delta)$ locality and availability $t$ follow from \Cref{cor:EA-Qc}.
\end{proof}

Next, we provide an example to illustrate \Cref{prop:EAQLRC_fiber}.

\begin{example} \label{ex:fiber_EAQLRC}
Let $q=29$, $Y=\mathbb{P}^1_x$,
$Y_1=\mathbb{P}^1_u$, and $Y_2=\mathbb{P}^1_v$.
Consider the morphisms
\[
\begin{array}{ll}
h_1:Y_1\to Y & x=u^4,
\\
h_2:Y_2\to Y & x=1-v^4
\end{array}
\] whose fiber product is the smooth projective curve
\(
X:\ u^4+v^4=1.
\)
Take the following maps: 
\[
\begin{array}{ll}
g_1:X\to Y_1 & (u,v)\mapsto u,
\\
g_2:X\to Y_2& (u,v)\mapsto v,
\\
g:X\to Y & (u,v)\mapsto x=u^4=1-v^4.
\end{array}
\]
Then
\[
\deg(g_1)=\deg(g_2)=4=r+\delta-1,
\]
which gives rise to locality
\(
r=3,\delta=2.
\)
To define the evaluation set, note that in $\mathbb{F}_{29}$, the nonzero fourth
powers are
\[
\{1,7,16,20,23,24,25\}.
\]
Among these, the values
\[
x=7,\qquad x=23
\]
have the property that both $x$ and $1-x$ are nonzero fourth powers. Set
\[
S=\{7,23\}\subseteq Y(\mathbb{F}_{29}),
\qquad
B=g^{-1}(S)\subseteq X(\mathbb{F}_{29}).
\]
For each $x\in S$, the equation $u^4=x$ has $4$ solutions and the equation
$v^4=1-x$ has $4$ solutions, so each fiber of $g$ above $x$ contributes
\(
4\cdot 4=16
\)
points. Hence,
\[
n=|B|=32.
\]

Now, take the divisor
\[
D=0
\]
on $Y=\mathbb{P}^1_x$, so that
\(
\mathcal L(D)=\operatorname{Span}_{\mathbb{F}_{29}}\{1\}\),
and
\(m=\ell(D)=1\).
With $x_1=u$ and $x_2=v$, the classical fiber product code in \Cref{codeTamolike} is
obtained from
\[
V=
\operatorname{Span}_{\mathbb{F}_{29}}
\{u^{e_1}v^{e_2}:0\le e_1,e_2\le 2\}.
\]
Therefore,
\(
k=\dim(C)=r^t m = 3^2\cdot 1=9
\).
Since $Y=\mathbb{P}^1$ has genus $0$, $\deg(g)=16$, and because $u$ and
$v$ each have pole order $1$ at infinity, we may take
\(
h=1
\).
Hence \Cref{codeTamolike} gives the classical distance bound
\[
d(C)\ge n-\ell\deg(g)-t(r-1)h
      = 32-1 \cdot 16-2\cdot 2\cdot 1
      = 12.
\]
Applying \Cref{prop:EAQLRC_fiber}, the entanglement-assisted quantum code
\(
Q=\textup{EACSS}(C,C)
\)
is an $(r,\delta)=(3,2)$ EAQLRC with availability
\(
t=2
\).

One may check that 
\(
\dim(\Hull(C))=5
\),
so
\(
c=\dim(C)-\dim(\Hull(C))=9-5=4
\).
Therefore, the quantum dimension is
\[
k_Q=n-2k+c = 32-2\cdot 9+4 = 18.
\]
We can also compute the entanglement-assisted distance bound explicitly. Since
\[
\dim(C^\perp)=32-9=23>\dim(C)=9,
\]
we have
\(
C^\perp\not\subseteq C
\). According to \Cref{rem:EACSS_params},
\(
d_{EA}\ge \mathrm{wt}\!\bigl(C^\perp\setminus \Hull(C)\bigr)
\).
A direct computation shows that
\(
\mathrm{wt}\!\bigl(C^\perp\setminus \Hull(C)\bigr)=4
\),
since there is a codeword of $C^\perp$ supported on the first four
coordinates with nonzero entries
\(
28,17,12,1,
\)
and this codeword does not belong to $\Hull(C)$. As a result, 
\(
d_{EA}\ge 4
\).

To make the locality and availability explicit, fix a coordinate
corresponding to a point
\(
P=(u,v)\in B
\).
The two recovery sets come from the fibers of $g_1$ and $g_2$:
\[
R_1(P)=\{(u,z)\in B : z^4=1-u^4\},
\]
\[
R_2(P)=\{(z,v)\in B : z^4=1-v^4\}.
\]
Since $u^4\in S$ and $1-u^4$ is a fourth power, each of these sets has size
\[
|R_1(P)|=|R_2(P)|=4=r+\delta-1.
\]
Moreover,
\[
R_1(P)\cap R_2(P)=\{P\},
\]
so the two recovery sets are disjoint outside of $P$. 
Because $\delta=2$, any single erased coordinate can be recovered from the
other \(
r=3
\)
coordinates in either recovery set. Thus, the code $Q$ is an \(
[[32,\;18,\;\ge 4;\;4]]
\)
EAQLRC with 
locality
\(
r=3
\)
and availability
\(
t=2.
\)
\end{example}

\subsubsection{Folded fiber product EAQLRCs}

The same argument applies to the folded family, giving folded EAQLRCs derived from the fiber product construction.

\begin{proposition}[Folded fiber product $(r,\delta)$ EAQLRC with availability $t$] \label{prop:folded_fiber_EAQLRC} Let $C^{(s)}$ be the code in \Cref{foldedTamo}. Then $Q = \textup{EACSS}(C^{(s)},C^{(s)})$ over the alphabet $\F_{q}^s$ is an $(r,\delta)$ EAQLRC with availability $t$ with parameters
\[n = |B|/s \hbox{ and } k = |B|/s - 2(r^tm)/s + c.\]

\end{proposition}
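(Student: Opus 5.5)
The plan mirrors the proofs of \Cref{thm:EA_FQTB} and \Cref{prop:EAQLRC_fiber}. We first show that the folded fiber product code $C^{(s)}=C^{(s)}(D,g_1,\ldots,g_t)$ is itself a classical $(r,\delta)$ LRC with availability $t$ over the alphabet $\F_q^s$. We then apply \Cref{cor:EA-Qc}. Finally, we read off $n$ and $k$ from \Cref{rem:EACSS_params} together with the folded-dimension convention.

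\textbf{Step 1: locality for the distinguished map.} By \Cref{codeTamolike}, the unfolded code $C=C(D,g_1,\ldots,g_t)$ has, for each point $P\in B$, the $t$ recovery sets $g_j^{-1}(g_j(P))$, $j\in[t]$. Each has size $r+\delta-1$, and punctured distance at least $\delta$ follows from hypotheses (iii) and (v). The ordering in \Cref{foldedTamo} is compatible with the fibers of the distinguished map $g_{j_0}$, and $s\mid |B|/(r+\delta-1)$. Arguing as in \cite[Lemma 57 and Lemma 58]{golowich2025quantum} and in the folded Tamo--Barg example, consecutive fibers of $g_{j_0}$ are grouped so that each folded block of $r+\delta-1$ folded coordinates consists of $s$ full fibers read in parallel. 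Recovering up to $\delta-1$ erased folded symbols then reduces to recovering, componentwise, up to $\delta-1$ erased symbols in each of these $s$ fibers. This is possible by the $(r,\delta)$ property of $C$ on each fiber.

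\textbf{Step 2: availability for the other maps.} For $j\neq j_0$ we want folded recovery sets of the form ``the folded coordinates meeting a union of $g_j$-fibers''. These should again have size at most $r+\delta-1$ and pairwise intersections equal to the designated erased set $J_i$. For this we intend to choose the within-fiber ordering of points coherently across the $s$ grouped $g_{j_0}$-fibers. Concretely, we would index each fiber by the value of $x_{j_0}$, which is possible by (v)(b). The aim is that the $a$-th points of the grouped fibers lie in a common union of $g_j$-fibers, with the other coordinate functions constant by (v)(a). Hypothesis (iv) then bounds the overlap of folded sets coming from different $j$ by $\delta-1$.

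\textbf{Step 3: the quantum code and its parameters.} With classical availability $t$ established for $C^{(s)}$, \Cref{cor:EA-Qc} gives that $Q=\textup{EACSS}(C^{(s)},C^{(s)})$ is an $(r,\delta)$ EAQLRC with availability $t$. By \Cref{rem:EACSS_params} the parameters are $n=|B|/s$ and $k=n-2\dim C^{(s)}+c$. Using $\dim C^{(s)}=r^tm/s$ from the preceding proposition, this gives $k=|B|/s-2r^tm/s+c$, with $c$ computed from the hull as in \Cref{def:EACSS_check_space}.

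\textbf{Main obstacle.} The hard part is Step 2. Folding along the fibers of one map $g_{j_0}$ need not be compatible with the fibers of the other maps $g_j$. A folded coordinate can straddle several $g_j$-fibers, which may inflate the folded recovery sets beyond $r+\delta-1$. If the coherent-ordering argument fails in general, the fallback is to add this compatibility as an explicit hypothesis on the ordering. This is satisfied, for instance, in product-like situations such as \Cref{ex:fiber_EAQLRC}, where $B$ is a grid in $(u,v)$. Under that hypothesis, availability is inherited verbatim.
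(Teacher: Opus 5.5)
You have put your finger on the real difficulty, but the proposal does not prove the statement, and Steps 1--2 as written fail. The paper's own justification is only your Step 3: it says ``the same argument applies'' (folding is taken to preserve locality and availability as in \Cref{thm:EA_FQTB}) and then invokes \Cref{cor:EA-Qc} and \Cref{rem:EACSS_params}.

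Step 1 misreads \Cref{foldedTamo}. There, each run of $r+\delta-1$ consecutive points is one $g_j$-fiber, and each $\widetilde B_i$ consists of $s$ consecutive points. So, when $s\mid r+\delta-1$, a folded symbol lies inside a single $g_{j_0}$-fiber. Erasing it erases $s$ symbols of a local code of distance $\delta$, which that fiber cannot repair once $s>\delta-1$. The ``$s$ fibers read in parallel'' fold you describe is the interleaved Tamo--Barg fold, i.e.\ a different code.

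In Step 2, whether coherent matching works depends on which fibers you group, a distinction your argument never makes; hypothesis (iv) is not what controls the overlap. In \Cref{ex:fiber_EAQLRC} ($t=2$, $\delta=2$, $s=2$), $B$ is two $4\times4$ grids in $(u,v)$, over $x=7$ and $x=23$. Suppose a folded symbol $\phi$ has both points over the same $x$. Either the two points share a $g_1$- or $g_2$-fiber, which then cannot repair $\phi$; your equal-value matching produces exactly this. Or let $F,F'$ and $G,G'$ be the $g_1$- and $g_2$-fibers through them. Then the crossing points $F\cap G'$ and $F'\cap G$ lie in symbols other than $\phi$ that both recovery sets must contain, so $|\Delta_1\cap\Delta_2|\ge 2>\delta-1$. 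Condition (iv) bounds intersections of single fibers, not of folded sets, which are unions of $s$ fibers.

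The gap cannot be closed for the code of \Cref{foldedTamo} as written. Take the paper's own folded example, with symbols $\{(x,u,v_1),(x,u,v_2)\}$, and let $\phi=\{(7,8,3),(7,8,7)\}$. Over $x=23$ one has $u\in\{3,7,22,26\}$ and $v\in\{8,9,20,21\}$. Consider any set of at most four symbols containing $\phi$, other than $\{\{(7,u,3),(7,u,7)\}:u^4=7\}$. For each such set, some $a(u)b(v)\in V$ with $\deg a,\deg b\le 2$ vanishes on the other symbols but not at $(7,8,3)$. So $\phi$ has a unique recovery set, and availability $2$ fails.

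Your fallback hypothesis is therefore necessary, but the grid satisfies it only for a different fold. Group points from $s$ distinct $g$-fibers via bijections that carry $g_j$-fibers to $g_j$-fibers for all $j$ at once, e.g.\ $(7;u,v)\leftrightarrow(23;\sigma(u),\tau(v))$. Then the folded recovery set for $g_j$ is the $r+\delta-1$ symbols meeting $s$ matched $g_j$-fibers. At most $\delta-1$ folded erasures cause at most $\delta-1$ erasures per fiber. Because fibers over distinct points of $Y$ are disjoint, two such sets meet only in the symbols through the unfolded intersection, so the availability pattern is inherited. With this fold built into the statement, your Step 3 finishes the proof.
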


The distance and entanglement parameter follow from the computation of the hull and Remark \ref{rem:EACSS_params}. We end this subsection with an example demonstrating \Cref{prop:folded_fiber_EAQLRC}.

\begin{example}
Consider the fiber product
\(
X:\ u^4+v^4=1
\)
over $\mathbb{F}_{29}$ as in \Cref{ex:fiber_EAQLRC},  with the same evaluation set
\(
B=g^{-1}(\{7,23\})\subseteq X(\mathbb{F}_{29})
\)
so that
\(
|B|=32
\).
Recall from the previous example that the underlying classical code $C$ has parameters
\(
[n,k,d]\;=\;[32,9,\ge 12],
\)
and the corresponding EAQLRC from \Cref{prop:EAQLRC_fiber} has
$r=3$, $\delta=2$, and $t=2$.
Observe that $r+\delta-1=4$ and $\frac{|B|}{r+\delta-1}=\frac{32}{4}=8$ is even. As a result, 
we may take $s:=2$.

To define the fold explicitly, order the points of $B$ as follows. For each
fixed $x\in\{7,23\}$ and each fixed solution $u$ of $u^4=x$, the four
solutions of $v^4=1-x$ form one $g_1$-fiber. We pair these four points into
two consecutive blocks of size $2$. For example, for $x=7$ and $u=8$, we take
the two folded coordinates
\[
((7,8,3),(7,8,7))
\qquad\text{and}\qquad
((7,8,22),(7,8,26)).
\]
Doing this for all choices of $x$ and $u$ produces the $2$-folded code
\( C^{(2)}\subseteq (\mathbb{F}_{29}^2)^{16}\) 
of length $
n=\frac{|B|}{2}=16$.

Applying \Cref{prop:folded_fiber_EAQLRC}, the entanglement-assisted quantum code
\(
\widetilde Q=\textup{EACSS}(C^{(2)},C^{(2)})
\)
is a folded EAQLRC. The \(F_{29}\)-dimension of the underlying folded code remains \(9\). However, according to our folded-dimension convention, \(C^{(2)}\) has folded dimension \( \frac{9}{2}\). A direct computation in the underlying \(F_{29}\)-linear representation gives \( \dim_{F_{29}}\operatorname{Hull}\left(C^{(2)}\right)=5\). Since \(\dim_{F_{29}} C^{(2)}=9\), the entanglement parameter is \( c=9-5=4\). Under our folded-dimension convention, the dimension of \(C^{(2)}\) is \(9/2\), so \[ k_{\mathrm{EA}}=16-2\cdot\frac{9}{2}+4=11. \]
 To compute the distance bound explicitly, observe 
that in the underlying \(F_{29}\)-linear representation, the ambient dimension is \(32\), and \( \dim_{F_{29}}\bigl((C^{(2)})^\perp\bigr)=32-9=23>9=\dim_{F_{29}}C^{(2)}\).  Hence \((C^{(2)})^\perp\not\subseteq C^{(2)}\).
According to \Cref{prop:folded_fiber_EAQLRC} and computation,
\[
d_{EA}\ge
\mathrm{wt}\!\bigl((C^{(2)})^\perp\setminus \Hull(C^{(2)})\bigr)=2.
\]
Indeed, there is a codeword of $(C^{(2)})^\perp$ supported on the first
two folded coordinates, namely
\[
\bigl((28,17),(12,1),(0,0),\dots,(0,0)\bigr),
\]
and this codeword does not belong to $\Hull(C^{(2)})$. Thus, \( d_{EA}\ge 2.
\)

We now make the folded locality explicit. A folded coordinate is of the form
\[
((x,u,v_1),(x,u,v_2)),
\]
where $(x,u,v_1)$ and $(x,u,v_2)$ lie in the same $g_1$-fiber. One explicit
family of folded recovery sets is obtained by varying $u$ while keeping the
pair $\{v_1,v_2\}$ fixed. For example, for the folded coordinate
\(
P_1=((7,8,3),(7,8,7))
\),
the folded recovery set is
\[
\widetilde R(P_1)
=
\{
((7,8,3),(7,8,7)),
((7,9,3),(7,9,7)),
((7,20,3),(7,20,7)),
((7,21,3),(7,21,7))
\}.
\]
Similarly, for
\(
P_2=((7,8,22),(7,8,26))
\),
the folded recovery set is
\[
\widetilde R(P_2)
=
\{
((7,8,22),(7,8,26)),
((7,9,22),(7,9,26)),
((7,20,22),(7,20,26)),
((7,21,22),(7,21,26))
\}.
\]
There are analogous folded recovery sets for the coordinates lying above
$x=23$.
Each such folded recovery set has size
\(
|\widetilde R(P)|=4=r+\delta-1.
\)
Since $\delta=2$, any single erased folded coordinate can be recovered from the other $r=3$ folded coordinates in its recovery set. Furthermore, a second folded recovery family is obtained analogously from the fibers of $g_2$, so \Cref{prop:folded_fiber_EAQLRC} gives availability $t=2$. Thus, $\widetilde Q$ is a $[[16,\;11,\;\ge 2;\;4]]$ folded
EAQLRC with locality $r=3$ and availability $t=2$. 
\end{example}

\subsection{Hermitian and Suzuki specializations}

We next specialize the general algebraic-geometric framework to two important families of curves with many rational points, namely Hermitian and Suzuki curves.

\subsubsection{Hermitian-based EAQLRCs}

We first consider the Hermitian family, whose associated classical LRCs with availability give rise to entanglement-assisted quantum codes by the general construction.
The Hermitian family considered here may also be viewed as arising from the Artin–Schreier fiber product construction in \cite[Theorem 7.1]{HMM_HLRC_RM_2024}, and thus provides an explicit source of EAQLRCs with availability.

\begin{proposition} \label{prop:herm} Let $C$ be a code on the Hermitian curve defined above and in \cite[Theorem 7.1]{Haymaker2016LocallyRC}, where $C$ has locality $p-1$ (each recovery set has size $p)$ and availability $t$. Then, $\textup{EACSS}(C,C)$ is an EAQLRC with availability $t$ and locality $p-1 $ (meaning $t$ recovery sets of size $p$), with parameters $[[n, p^tq-2p^tq - 2((q-tp^t+1)(p-1)^t) +2c, d; c]]$.

\end{proposition}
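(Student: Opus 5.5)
The plan is to follow the template used for \Cref{prop:EAQLRC_fiber} and \Cref{thm:EA_QTB}: take the classical parameters and recovery structure of the Hermitian code from \cite[Theorem 7.1]{Haymaker2016LocallyRC}, then pass to the entanglement-assisted code through \Cref{cor:EA-Qc} and \Cref{rem:EACSS_params}.

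First I would recall the recovery structure of the classical code. The Hermitian curve $y^q+y=x^{q+1}$ over $\F_{q^2}$ with $q=p^e$ has Artin--Schreier subcovers coming from the additive structure of $y$. Following \cite[Theorem 7.1]{HMM_HLRC_RM_2024}, it can be viewed as a fiber product over $Y$ of $t$ curves $Y_j$, each of which $X$ covers with degree $p$. The fibers of the maps $g_j:X\to Y_j$ restricted to the evaluation set $B$ then give, for each point $P$, $t$ recovery sets of size $p$. These sets pairwise intersect only in $\{P\}$, because the corresponding $\F_p$-subspaces of the $y$-coordinate translations intersect trivially. This is exactly $(r,\delta)=(p-1,2)$ locality with availability $t$, with $J_i=\{i\}$. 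I would also check that the local punctured codes have $\dim\le r$ and distance $\ge 2$, which is the content of \Cref{codeTamolike} in this specialization. So $C$ satisfies the hypotheses of \Cref{cor:EA-Qc} for both $C_1=C_2=C$.

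Second, I would apply \Cref{cor:EA-Qc}. The local check pairs $(c'_{1j},c'_{2j})$ coincide, since both come from the same local code on each fiber. Their supports lie in the fiber, which gives $|R_i|\le p=r+\delta-1$. The local code $\textup{EACSS}(C|_{R_i},C|_{R_i})$ must correct one erasure. I would get this from the fact that the local dual check has full support on the fiber (a Reed--Solomon-type local code of length $p$ and dimension $p-1$). Its dual is then spanned by an all-nonzero vector, so every nonzero element of the local dual has weight $p\ge 2$, and \Cref{rem:EACSS_params} gives local distance $\ge 2=\delta$. Availability $t$ then transfers directly, as in the second part of \Cref{cor:EA-Qc}.

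Third comes the parameter count. Using $k_Q=n-2\dim C+c$ from \Cref{rem:EACSS_params}, I would substitute the dimension of $C$ from \cite[Theorem 7.1]{Haymaker2016LocallyRC}, which counts monomials $x^ay^b$ with the local exponent restricted to $[p-1]_0$ in each of the $t$ Artin--Schreier directions. Matching this against the stated expression $p^tq-2p^tq-2((q-tp^t+1)(p-1)^t)+2c$ is the step I expect to be the main obstacle. The displayed formula seems to involve $n=p^tq$-type quantities and a factor $2c$ rather than $c$. I would first try to reconcile this by checking whether the source counts $c$ once for each of the $X$- and $Z$-check spaces, and by recomputing $\dim C$ carefully from the monomial basis. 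The distance $d$ is left implicit and is taken from \Cref{rem:EACSS_params} via the relative hull of $C$.
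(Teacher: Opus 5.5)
Your route is the paper's. Its proof only cites \cite[Theorem 7.1]{Haymaker2016LocallyRC} for the classical parameters and fiber recovery sets, invokes \Cref{cor:EA-Qc} for locality and availability, and reads off $d$ and $c$ from \Cref{rem:EACSS_params}.

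\textbf{The dimension formula.} The step you leave open cannot be closed the way you plan. Take $n=p^tq$, $\dim C=(q-tp^t+1)(p-1)^t$ (the values behind the Hermitian row of \Cref{tab:section5-ea-summary}), and $c=\dim C-\dim\mathrm{Hull}(C)$. Then the displayed dimension is
\[
p^tq-2p^tq-2\dim C+2c=-p^tq-2\dim\mathrm{Hull}(C)<0,
\]
so it is not a dimension at all. This is a typo, not a convention issue. \Cref{rem:EACSS_params} gives $n-k_X-k_Z+c$, with leading term $+n$ and a single $c$, so counting $c$ once per check space will not reproduce the displayed expression. What the citation chain actually yields, and what the summary table records, is $k=n-2\dim C+c=p^tq-2(q-tp^t+1)(p-1)^t+c$. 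You should state explicitly that this corrected value is what you prove, rather than leaving it as an unresolved obstacle.

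\textbf{The locality step.} This is the more serious gap, and the paper's short proof shares it. The count $n-2\dim C+c$ comes from \Cref{def:EACSS_check_space} with $C$ as the \emph{check space}. So \Cref{cor:EA-Qc} needs vectors of $C_1\times C_2=C\times C$ supported in $R_i$.

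The fiber check you use, the full-support vector spanning $(C|_{R_i})^\perp$, lies in $C^\perp$, not in $C$. A word of the punctured code $C|_{R_i}$ is not a word of $C$ supported in $R_i$. Whenever $d(C)>p$, $C$ has no nonzero word supported on a fiber. Projection of $\widetilde C$ onto the first $n$ coordinates is injective, so no nontrivial stabilizer is supported on $R_i\cup E_B$ either.

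Now entangle the logical space maximally with a reference $\mathrm{Ref}$. Then $\rho_{R_i\cup E_B}$ is maximally mixed, and $I\bigl(i:\mathrm{Ref}\,([n]\setminus R_i)\,\big|\,(R_i\setminus\{i\})E_B\bigr)=2\neq 0$ in base-$q$ units. This rules out any recovery channel acting on $(R_i\setminus\{i\})\cup E_B$.

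\Cref{ex:EAQLRC_TB} shows the problem concretely. The weight-$3$ word of $C^\perp$ on the recovery set $\{1,3,9\}$ is exactly the logical operator used there to bound $d_{EA}$, not a check.

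Your fiber argument does go through if $C^\perp$ is the check space, since the fiber checks then become genuine stabilizers. In that case, however, \Cref{rem:EACSS_params} gives quantum dimension $\dim C-\dim\mathrm{Hull}(C)$ using $n-\dim C-\dim\mathrm{Hull}(C)$ ebits. So as it stands, your argument does not deliver locality and the advertised dimension for the same code.
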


\begin{proof} This follows from the parameters of the code in \cite[Theorem 7.1]{Haymaker2016LocallyRC} and the fact that the entanglement-assisted procedure does not change the locality as shown in \Cref{cor:EA-Qc}. The distance and the entanglement parameter $c$ follow from  \Cref{rem:EACSS_params}.
    
\end{proof}

\subsubsection{Suzuki-based EAQLRCs}

We next investigate the family of Suzuki codes, which provides explicit examples in which the recovery sets need not all have the same size. 

We can now construct an $(r,2)$ EAQLRC with availability $2$ from the Suzuki curve. Recall that the Suzuki curve $\mathcal{S}_q$ for $q_0 = 2^m, q = 2^{2m+1}, m \geq 1$ is defined as $(x^q + x)x^{q_0} = y^{q} +y$. In \cite[Theorem 6.1]{Haymaker2016LocallyRC}, the authors show that there exists LRCs $C(D,B)$ with availability $2$ and locality $(q-1,q-2)$ on the Suzuki curve $S_q$, with the following parameters:
\begin{enumerate}
    \item over $\mathbb{F}_q$, with length $n = q(q-1)$
    and dimension $k = (q-1)(q-2);$
    
    \item over $\mathbb{F}_{q^4}$, with length $n = q(q-1)\bigl(q^2 + 2qq_0 + q + 1\bigr)$
    and dimension $k = (q-1)(q-2)\bigl(q^2 + 2qq_0 + q + 1\bigr).$
\end{enumerate}

In both cases, $B$ is obtained by removing the point at infinity together with an additional set of $q$ rational points from $S_q(\mathbb{F})$ and $|B| = |S_q(\mathbb{F})| - (q+1),$  where $\F = \F_q$ in this first case and $\F = \F_{q^4}$.

We can now construct an $(r,2)$ EAQLRC with availability $2$ where the two recovery sets have sizes $q-1$ and $q-2$, respectively. We have the following proposition.

\begin{proposition} \label{prop:suzuki}
Let $C= C(D,B)$ be the code on the Suzuki curve $\mathcal{S}_q$ above with availability $2$ and with two recovery sets of sizes \(q-1\) and \(q-2\), equivalently with helper localities \(q-2\) and \(q-3\) in the convention where locality counts the number of other coordinates accessed. Set $c = \dim(C) - \dim(\mathrm{Hull}(C))$.
Then $Q := \textup{EACSS}(C,C)$ is an $(r,2)$ EAQLRC with availability $2$, recovery sets of sizes $q-1$ and $q-2$, 
and the following parameters: 
\begin{enumerate}
    \item over $\mathbb{F}_q$, $Q$ has length $n = q(q-1)$
    and dimension $k = q(q-1) - 2(q-1)(q-2) + c$.
    \item over $\mathbb{F}_{q^4}$, $Q$ has length $n = q(q-1)\bigl(q^2 + 2qq_0 + q + 1\bigr)$
    and dimension $k = q(q-1)\bigl(q^2 + 2qq_0 + q + 1\bigr) - 2((q-1)(q-2)\bigl(q^2 + 2qq_0 + q + 1\bigr)) + c$.
\end{enumerate}

\end{proposition}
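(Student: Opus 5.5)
The proof will follow the pattern of \Cref{prop:herm} and \Cref{prop:EAQLRC_fiber}: take the classical parameters of the Suzuki code $C=C(D,B)$ from \cite[Theorem 6.1]{Haymaker2016LocallyRC}, use \Cref{cor:EA-Qc} to pass locality and availability to $\textup{EACSS}(C,C)$, and read off the quantum dimension from \Cref{rem:EACSS_params}.

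\textbf{Step 1 (classical input).} By \cite[Theorem 6.1]{Haymaker2016LocallyRC}, $C$ is a linear code with availability $2$. For each $P\in B$ there are two recovery sets of sizes $q-1$ and $q-2$. They come from the two projections of the Suzuki curve, and they meet only in the coordinate $P$; since $\delta=2$, we have $J_P=\{P\}$. The length and dimension are $n=q(q-1)$, $k=(q-1)(q-2)$ over $\F_q$, and $n=q(q-1)(q^2+2qq_0+q+1)$, $k=(q-1)(q-2)(q^2+2qq_0+q+1)$ over $\F_{q^4}$.

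\textbf{Step 2 (local checks).} For each recovery set $R$, produce a dual codeword of $C$ whose support lies in $R$ and contains $P$. It comes from the local interpolation relation on the fiber: the restriction $C|_R$ has dimension at most $|R|-1$ and distance at least $2$. Taking the same check for both $X$ and $Z$ gives the $\delta-1=1$ pair required in \Cref{cor:EA-Qc}.

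\textbf{Step 3 (local distance).} Verify that $\textup{EACSS}(C|_R,C|_R)$ has distance at least $2$, so that one erasure inside $R$ is correctable.

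\textbf{Step 4 (conclusion).} \Cref{cor:EA-Qc} now gives an $(r,2)$ EAQLRC with availability $2$, with the same two recovery sets.

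\textbf{Step 5 (parameters).} \Cref{rem:EACSS_params} with $C_X=C_Z=C$ gives dimension $n-2k+c$, where $c=\dim C-\dim\mathrm{Hull}(C)$. Substituting the two values of $(n,k)$ from Step 1 gives the stated formulas.

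\textbf{Main obstacle.} The recovery sets have unequal sizes, whereas \Cref{def:r-delta-QLRC} uses a single bound $r+\delta-1$. I would take $r=q-2$, so that $|R|\le r+1$ covers both sets. The local-distance check in Step 3 is the other delicate point. I would argue it directly: one erased qudit in $R$ is recoverable because the local $X$- and $Z$-checks both act nontrivially on $P$ and are supported in $R$, exactly as in the single-erasure argument of \cite[Proposition 33]{golowich2025quantum}. Any anticommutation between the local checks is absorbed by the ebits.
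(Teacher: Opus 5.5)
\textbf{Verdict: there is a genuine gap.} Your outline follows the paper's own route: its justification is just \cite[Theorem 6.1]{Haymaker2016LocallyRC} plus \Cref{cor:EA-Qc} and \Cref{rem:EACSS_params}, and that one-line justification glosses over the same point. As written, though, your Steps 2--3 and Step 5 use incompatible conventions, so the argument does not close.

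\textbf{Steps 2--3 do not produce stabilizers of $\textup{EACSS}(C,C)$.} In \Cref{def:EACSS_check_space} the arguments of $\textup{EACSS}$ are check spaces. The dimension $n-2k+c$ with $c=\dim C-\dim\mathrm{Hull}(C)$ that you use in Step 5 is what \Cref{rem:EACSS_params} gives when $C$ itself is the $X$- and $Z$-check space. Correspondingly, \Cref{cor:EA-Qc} asks for local pairs in $C\times C$. Your Step 2 instead produces a word $h\in C^\perp$.

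In $\textup{EACSS}(C,C)$ such an $h$ is not a stabilizer. The $X$-type operator with exponent $(h,0)\in\F_q^{n+c}$ commutes with every augmented $Z$-check $(z,\ast)$ with $z\in C$. So for $h\notin\mathrm{Hull}(C)$ it is a nontrivial logical operator. The paper's examples use exactly such local words of $C^\perp$ to bound $d_{EA}$.

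Nor can you find a replacement inside $C$. An augmented check supported in $R\cup E_B$ restricts on the transmitted coordinates to some $y\in C$ with $\supp(y)\subseteq R$, and $y=0$ gives the trivial check. No nonzero such $y$ exists:
\begin{itemize}
\item The two recovery families are fibers of two projections. So if $\supp(y)$ lies in a set $R_1$ of one family, $y$ meets each set $R_2$ of the other family in at most one coordinate.
\item Since $C|_{R_2}$ has minimum distance at least $2$, this forces $y|_{R_2}=0$ for every $R_2$.
\item These sets cover $B$, so $y=0$.
\end{itemize}
Hence there is no local $X$- or $Z$-stabilizer through $P$. The \cite[Proposition 33]{golowich2025quantum} mechanism you invoke in Step 3 has nothing to act on.

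\textbf{Switching conventions breaks Step 5 instead.} You can make your Step 2 checks genuine stabilizers by taking $C^\perp$ as both check spaces. This is the reading in which availability is meaningful: the local checks $h_1,h_2$ of the two recovery sets through $P$ satisfy $h_1\cdot h_2=h_1(P)\,h_2(P)\neq 0$, and this non-orthogonality is exactly what the ebits repair. But then \Cref{rem:EACSS_params} gives $c'=(n-k)-\dim\mathrm{Hull}(C)$ and quantum dimension $n-2(n-k)+c'=k-\dim\mathrm{Hull}(C)$. The claimed value is $n-2k+c=(n-k)-\dim\mathrm{Hull}(C)$. These differ because $n\neq 2k$ for both Suzuki codes.

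So no single convention gives both the locality/availability and the stated dimension. Step 5 cannot be obtained by substitution; the dimension has to be recomputed, or the statement corrected, under the convention that makes Step 2 valid.

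\textbf{Recovery-set sizes in Step 1.} The helper localities $(q-1,q-2)$ of \cite[Theorem 6.1]{Haymaker2016LocallyRC} mean recovery sets of sizes $q$ and $q-1$. Over $\F_q$, fibers of size $q-2$ could not even partition $B$, since $q-2$ does not divide $q(q-1)$. So your choice $r=q-2$ does not bound both sets; you would need $r=q-1$.
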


The distance and the entanglement parameter $c$ follow from the computation of the hull and Remark \ref{rem:EACSS_params}. According to \cite{Haymaker2016LocallyRC}, computing the distances of the codes above depends on the degrees of the generators $x_i$ of the corresponding function field extension and explicit generators are not available. Therefore, the degrees cannot be computed and the bound cannot be evaluated directly. Computing the distance in the EAQECC case also requires one to compute the hulls of these codes. Thus, we write the general distance formula given by the EAQECC construction.

\subsection{Summary of explicit EAQLRC families}

We conclude the section by summarizing the explicit EAQLRC families obtained above and their basic parameters in Table \ref{tab:section5-ea-summary}.

\begin{table}[t]
\centering
\footnotesize
\renewcommand{\arraystretch}{1.2}
\setlength{\tabcolsep}{4pt}
\begin{tabular}{|p{4.1cm}|C{2.2cm}|C{4.9cm}|C{1.8cm}|C{1.9cm}|}
\hline
\textbf{Classical Code $C_1$} &
\textbf{Length} &
\textbf{Dimension} &
\textbf{Locality} &
\textbf{Availability} \\
\hline

\makecell[l]{Tamo--Barg code $C$\\ (\Cref{thm:EA_QTB})}
& $\displaystyle q-1$
& $\displaystyle (q-1) - 2k + c$
& $(r,\delta)$
& $1$
\\
\hline

\makecell[l]{folded Tamo--Barg code $C^{(s)}$\\ (\Cref{thm:EA_FQTB})}
& $\displaystyle \frac{q-1}{s}$
& $\displaystyle \frac{q-1}{s} - \frac{2k}{s} + c$
& $(r,\delta)$
& $1$
\\
\hline

\makecell[l]{fiber product code\\ (\Cref{prop:EAQLRC_fiber})}
& $\displaystyle |B|$
& $\displaystyle |B| - 2r^t m + c$
& $(r,\delta)$
& $t$
\\
\hline

\makecell[l]{folded fiber product\\ code $C^{(s)}$\\ (\Cref{prop:folded_fiber_EAQLRC})}
& $\displaystyle \frac{|B|}{s}$
& $\displaystyle \frac{|B|}{s} - \frac{2r^t m}{s} + c$
& $(r,\delta)$
& $t$
\\
\hline

\makecell[l]{Hermitian code\\ (\Cref{prop:herm})}
& $\displaystyle p^t q$
& $\displaystyle p^t q - 2(q-tp^t+1)(p-1)^t + c$
& $p-1$
& $t$
\\
\hline

\makecell[l]{Suzuki code over $\F_q$\\ (\Cref{prop:suzuki})}
& $\displaystyle q(q-1)$
& $\displaystyle q(q-1)-2(q-1)(q-2)+c$
& $q-1,\ q-2$
& $2$
\\
\hline

\makecell[l]{Suzuki code over $\F_{q^4}$\\ (\Cref{prop:suzuki})}
& $\displaystyle q(q-1)(q^2+2qq_0+q+1)$
& \makecell[l]{$\displaystyle q(q-1)(q^2+2qq_0+q+1)$\\
$\displaystyle -2(q-1)(q-2)$ \\ $\displaystyle (q^2+2qq_0+q+1)$\\$\displaystyle +c$}
& $q-1,\ q-2$
& $2$
\\
\hline
\end{tabular}
\caption{Summary of the explicit EAQLRC families $\textup{EACSS}(C_1,C_1)$ obtained in \Cref{sec:constructions} from classical codes $C_1$, including their length, dimension, locality, and availability. For the Tamo--Barg families, $k$ denotes the dimension of the underlying classical code. The minimum distance is bounded below as in \Cref{rem:EACSS_params}. For the folded families, the displayed dimensions assume that the folded classical codes are linear over a  field extension of degree $s$ and have the indicated divided dimensions.}
\label{tab:section5-ea-summary}
\end{table}

\section{Conclusions and Discussion}
\label{sec:concl}

Erasure recovery plays an important role in classical distributed storage. In both the classical and quantum settings, erasure recovery is often easier   than general error correction because the error locations are known. Furthermore, on hardware platforms such as superconducting as well as neutral-atom devices, various types of errors can be converted to erasures (errors whose locations are known). Entanglement-assisted codes were developed as a way to relax the orthogonality requirements for quantum CSS codes. These codes are known to improve tradeoffs involving the number of encoded logical qubits as well as the number of recoverable erasures.

In this work, we presented \emph{EAQLRCs with availability}. Classically, if an erasure can be recovered by more than one set of coordinates that are pairwise disjoint,  the code is said to be \emph{ locally recoverable with availability}. While it has been demonstrated in \cite{golowich2025quantum} that disjoint recovery sets and hence the notion of  availability cannot be defined for quantum CSS codes, we define availability in the entanglement-assisted setting. Here, we show that given two classical LRCs, we can construct an EAQLRC. We compute code parameters -- including dimension and distance -- which require determining the hull of the classical LRC. We also determine a Singleton-like tradeoff among length, dimension, distance, locality, availability, and entanglement consumption for EAQLRCs. With respect to code constructions, we present both random and explicit constructions. Our explicit constructions rely on variants of algebraic-geometry codes such as Tamo--Barg codes, folded Tamo--Barg codes, Hermitian codes, Suzuki codes, and fiber product codes. 

It remains to determine whether the explicit codes presented here are optimal with respect to the Singleton bound presented. This depends on computing the dimensions of the relevant hulls and the minimum weights of the complements appearing in the entanglement-assisted distance formula.  Determining these parameters will also allow a more precise comparison with the parameters of quantum CSS LRCs. Another interesting question is the nature of quantum hierarchical local recovery in the entanglement-assisted setting, which is a work in progress.

\bibliographystyle{abbrv}
\bibliography{EAQLRC}

\end{document}